\newcommand\be{\begin{equation}}
\newcommand\ee{\end{equation}}
\newcommand\bea{\begin{eqnarray}}
\newcommand\eea{\end{eqnarray}}
\newcommand\beaa{\begin{eqnarray*}}
\newcommand\eeaa{\end{eqnarray*}}
\newcommand\beba{\begin{equation}\left\{\begin{array}{rcl}}
\newcommand\eeba{\end{array}\right.\end{equation}}
\newcommand\bebaa{\begin{equation*}\left\{\begin{array}{rcl}}
\newcommand\eebaa{\end{array}\right.\end{equation*}}
\newcommand\red{\color{red}}
\newcommand\bR{{\mathbb{R}}}
\newcommand\bN{{\mathbb{N}}}
\newcommand\bS{{\mathbb{S}}}
\newcommand\bZ{{\mathbb{Z}}}
\newcommand\cS{{\mathcal{S}}}
\newcommand\cL{{\mathcal{L}}}
\newcommand\cM{{\mathcal{M}}}
\newcommand\cK{{\mathcal{K}}}
\newcommand\cN{{\mathcal{N}}}
\newcommand\bx{{\bm{x}}}
\newcommand\by{{\bm{y}}}
\newcommand\ttheta{{\tilde{\theta}}}
\newcommand\tphi{{\tilde{\phi}}}
\newcommand\tF{{\tilde{F}}}
\newcommand\tf{{\tilde{f}}}
\newcommand{\eps}{\epsilon}
\newcommand{\dsum}{\displaystyle \sum}
\newcommand{\dlim}{\displaystyle \lim}
\newcommand{\lng}{\left\langle  \right.}
\newcommand{\rng}{\left.  \right\rangle}
\newtheorem{theorem}{Theorem}[section]
\newtheorem{corollary}[theorem]{Corollary}
\newtheorem{example}[theorem]{Example}
\newtheorem{remark}[theorem]{Remark}
\newtheorem{proposition}[theorem]{Proposition}
\newtheorem{assumption}[theorem]{Assumption}
\numberwithin{equation}{section}
\def\theequation{\arabic{section}.\arabic{equation}}
\begin{document}
\nolinenumbers

\title[Spot solutions on oblate spheroids]{Spot solutions to a neural field equation \\ on oblate spheroids}

\author[H. Ishii]{Hiroshi Ishii}
\address{Research Center of Mathematics for Social Creativity, Research Institute for Electronic Science, Hokkaido University, Hokkaido, 060-0812, Japan}
\email{hiroshi.ishii@es.hokudai.ac.jp}

\author[R. Watanabe]{Riku Watanabe}
\address{Department of Mathematics, Faculty of Science, Hokkaido University, Hokkaido, 060-0810, Japan}
\email{watanabe.riku.y9@elms.hokudai.ac.jp}

\thanks{Date: \today. Corresponding author: Hiroshi Ishii}

\thanks{{\em Keywords. Neural field equations, Spot pattern, Spheroid, Integro-differential equations}}
\thanks{{\em 2020 MSC} Primary 35R09, Secondary 35P20, 45L05}

\begin{abstract}
Understanding the dynamics of excitation patterns in neural fields is an important topic in neuroscience. 
Neural field equations are mathematical models that describe the excitation dynamics of interacting neurons to perform the theoretical analysis.
Although many analyses of neural field equations focus on the effect of neuronal interactions on the flat surface, the geometric constraint of the dynamics is also an attractive topic when modeling organs such as the brain. 
This paper reports pattern dynamics in a neural field equation defined on spheroids as model curved surfaces.
We treat spot solutions as localized patterns and discuss how the geometric properties of the curved surface change their properties. 
To analyze spot patterns on spheroids with small flattening, 
we first construct exact stationary spot solutions on the spherical surface and reveal their stability. 
We then extend the analysis to show the existence and stability of stationary spot solutions in the spheroidal case.
One of our theoretical results is the derivation of a stability criterion for stationary spot solutions localized at poles on oblate spheroids. 
The criterion determines whether a spot solution remains at a pole or moves away.
Finally, we conduct numerical simulations to discuss the dynamics of spot solutions with the insight of our theoretical predictions. 
Our results show that the dynamics of spot solutions depend on the curved surface and the coordination of neural interactions.
\end{abstract}

\maketitle
\setcounter{tocdepth}{3}

\section{Introduction}\label{sec:int}


Neural field theory provides a macroscopic modeling approach to describe the spatial and temporal evolution of neural activity in terms of population-level variables such as membrane potential or firing rate. 
A variety of integro-differential equations, commonly known as neural field equations, have been introduced to capture the dynamics of excitation patterns in spatially distributed neuronal populations (e.g., \cite{Bress, CPWT, Murray, Potthast}).
Among them, the Amari model is a representative example, characterized by neuronal interactions mediated through spatial connection kernels \cite{Amari}.
The spatially homogeneous Amari model is given by the following on the domain $\Omega$:
\beaa
u_t (t,\bx) = \int_{\Omega} K(|\bx-\by|) P(u(t,\by)) d\by - u(t,\bx) \quad (t>0,\ \bx\in\Omega),
\eeaa
where $u(t,\bx)$ is the average activity of the neuronal population at time $t>0$ and position $\bx\in\Omega$, 
the connection kernel $K$ represents the strength of neural interactions depending on the distance,
and $P(u)$ is the mean firing rate.
Typical examples of $P(u)$ are a sigmoid function or a step function
\beaa
P(u) = H(u-u_T)
\eeaa
with the threshold $u_T\in\bR$ of the neural activity and Heaviside step function $H(u)$.
This neural field equation was not directly derived from a microscopic model of neuronal spiking processes, but was derived phenomenologically as a neural mean field \cite{Amari}.
The Amari model and related models have been reported to generate a wide variety of spatio-temporal dynamics. 
Theoretical studies to understand the dynamics of patterns have been conducted through the construction of stationary and traveling wave solutions, and the analysis of the interface dynamics \cite{Amari, Bress, Bress2, CSA, GC1, GC2, GZ, YZ}.

It is more natural to consider neural field equations on curved surfaces when considering organs such as the brain.
In particular, it is attractive to study how the patterning changes under the geometric constraints of curved surfaces.
The effects of geometry have also been discussed in the context of neuroscience (e.g. \cite{MCC, PAO}).
In the Amari model and related models, 
the relationship between geometrical properties and patterning has been considered in numerical simulations \cite{MCC}.
Furthermore, in the case of the unit spherical surface, 
a theoretical analysis of the linearized stability of constant states has also been carried out by \cite{VNFC}.
While theoretical analysis of pattern formation in neural field equations on curved surfaces has been advanced, to the best of our knowledge, there are few analytical studies for the case of curved surfaces with nonconstant curvature.

As related problems in reaction-diffusion models, 
pattern formation has been analyzed on spherical surfaces \cite{RRW, TW}, torus \cite{SW, TT}, and deformed surfaces \cite{NI1, NI2}.
Notably, it has been reported that a stationary pattern on a symmetric surface can become a dynamic pattern due to the asymmetry of the surface \cite{NI1, NI2}.
As reported in these studies, the properties of curved surfaces are known to have a significant impact on pattern behavior.

In this paper, we consider pattern formation in the Amari model defined on curved surfaces with the aim of studying how geometry of curved surfaces affects the solution behavior.
In particular, we focus on spot solutions as localized patterns and examine how they behave on spheroids with small flattening as model surfaces.
We are interested in how the properties of the spot solution formed on the sphere are changed by the perturbation that transforms it into a spheroid.
In the Amari model defined on general curved surfaces, 
the properties of curved surfaces appear in the integral term that describes the interaction between neurons.
To study the behavior of solutions, 
it is necessary to investigate the relationship between the global information of the curved surface and the connection kernel, 
which is a difficult part of the analysis.

Our analysis is based on a perturbation expansion. 
First, we construct a spot solution on the spherical surface and discuss its stability. 
Then, we construct a spot solution localized at the pole of oblate spheroids with small flattening and derive its stability criterion.
The derived stability criterion is determined from a quantity based on information about the curved surface and the connection kernel.
Based on the predictions of the theoretical analysis, numerical simulations are performed and the behavior of the spot solution is discussed.

The paper is organized as follows:
the problem setting is explained in Section \ref{sec:pro}.
We present results on the existence and stability of stationary spot solutions on the unit spherical surface $\bS^2$ in Section \ref{sec:s2}.
In Section \ref{sec:cri} we derive a stability criterion for stationary spot solutions localized at the north pole on spheroids with small flattening.
Section \ref{sec:num} shows the usefulness of the stability criterion, 
as well as the behavior of the spot solution by numerical simulation.
We summarize the results and some of the remaining issues in Section \ref{sec:dis}.

\bigskip
\section{Problem setting}\label{sec:pro}

Let us explain the setting of our problem.
We first formulate Amari model on spheroids.
The unit spherical surface $\bS^2$ is parameterized as follows:
\beaa
\bS^2:= \{ \bx(\theta,\phi)= (\sin\theta\cos\phi, \sin\theta\sin\phi, \cos\theta)\ 
\mid\ (\theta,\phi)\in [0,\pi]\times [0,2\pi)\ \}.
\eeaa
For a sufficiently small constant $\eps>0$,
we treat an oblate spheroid with small flattening represented by
\beaa
x^2 + y^2 + \left( \dfrac{z}{1-\eps} \right)^2 =1,\quad (x,y,z)\in\bR^3.
\eeaa
Let us define 
\be\label{surface}
S := \left\{ \bx_{\eps}(\theta,\phi) =  (1-\eps p(\eps,\bx(\theta,\phi))) \bx(\theta,\phi)\ \mid\ \bx(\theta,\phi)\in\bS^2,\  (\theta,\phi)\in [0,\pi]\times [0,2\pi) \right\}
\ee
as the oblate spheroid with the same parameters as $\bS^2$, where
\be\label{per}
p(\eps,\bx(\theta,\phi)) := \dfrac{1}{\eps} \left\{ 1 -  \dfrac{1-\eps}{\sqrt{\cos^2\theta + (1-\eps)^2 \sin^2\theta}} \right\}.
\ee
It is easy to see that
\beaa
\sup_{\theta\in[0,\pi]}|p(\eps,\bx(\theta,\phi)) - \cos^2\theta | = O(\eps)
\eeaa
as $\eps\to +0$.
We note that $p(\eps,\bx(\theta,\phi))$ is a function of $\eps$ and $\theta$, independent of $\phi$.
However, some calculations can proceed with the more general $p(\eps,\bx)$, and thus we will proceed without using the concrete form as much as possible.

We treat the following neural field equation on $S$:
\be\label{eq:main}
\dfrac{\partial u}{\partial t}(t,\bx_{\eps}) = \int_{S} K(d(\bx_{\eps},\by_{\eps}))H(u(t,\by_{\eps})-u_T) ds(\by_{\eps}) - u(t,\bx_{\eps})\quad (t>0,\ \bx_{\eps}\in S),
\ee
where $d(\bx_{\eps},\by_{\eps})$ is the geodesic distance on $S$, $u_T\in\bR$, and $K:\bR\to\bR$ is a sufficiently smooth function.
Furthermore, $ds(\by_{\eps})$ is the element of area on $S$ defined as
\beaa
ds(\by_{\eps}(\ttheta,\tphi)) := \left| \dfrac{\partial \by_{\eps}}{\partial \ttheta}(\ttheta,\tphi) \times \dfrac{\partial \by_{\eps}}{\partial \tphi}(\ttheta,\tphi) \right| d\ttheta d\tphi.
\eeaa
Note that for the geodesic distance on $S$, an approximate formula is given by \eqref{exp:dist}.

For the analysis of spot solutions and numerical simulations, 
we treat
\be\label{kernel}
K(x)= \dsum^{3}_{n=0} c_n \cos n x 
\ee
with $c_n\in\bR\ (n=0,1,2,3)$ as an example of $K$, which is easy to handle. 
In specific calculation examples, we will mainly deal with connection kernels called Mexican hat-type kernels, which are positive near the origin and negative away from the origin, as shown in Fig. \ref{fig:kernel}.
This type of connection kernels has excitatory connections between neighboring neurons, but inhibitory connections to distant neurons.
Such connection kernels are often used as simplified representations of neural interactions to explain pattern formation in neural fields \cite{Amari, CSB, GC1, Murray}.

\begin{figure}[bt] 
\begin{center}
	\includegraphics[width=14cm]{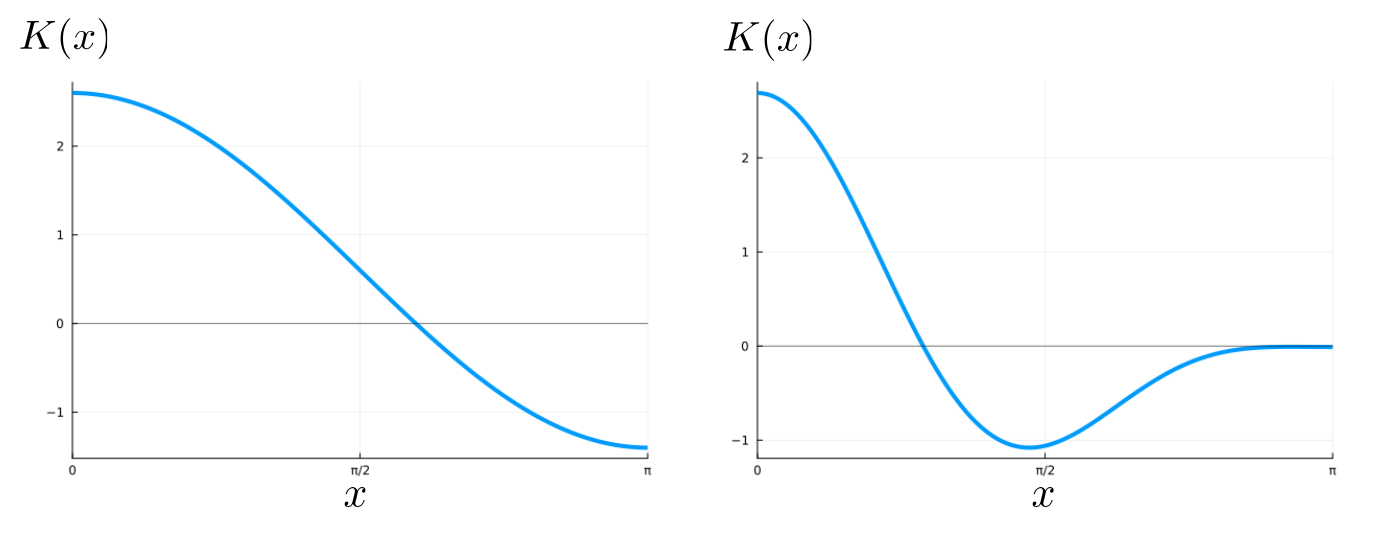}
\end{center}
\caption{\small{
The graph of \eqref{kernel} with Mexican hat-type shapes. 
Left: $c_0=0.6$, $c_1=2.0$, $c_2=c_3=0.0$.
Right: $c_0=0.14$, $c_1=0.9$, $c_2=1.2$, $c_3=0.45$.
}}
\label{fig:kernel}
\end{figure}

\bigskip
\section{Stationary spot solutions on the unit spherical surface}\label{sec:s2}

To analyze spot solutions on spheroids with small flattening,
we first consider the properties of stationary spot solutions
on $\bS^2$.
We use a perturbation method with flattening $\eps>0$ as a small parameter to construct spot solutions on the spheroid in Section \ref{sec:cri}.
For this reason, it is essential to analyze the existence and linear stability of the spot solution when $\eps=0$.
Furthermore, since the implicit function theorem is used in the perturbation problem, it is necessary as a condition that the linearized operator around the spot solution is invertible in some sense.

For the one-dimensional space,
various analytical results on existence, stability, and weak interactions for pulse solutions have been reported
(e.g. \cite{Amari, Bress2, GC1, GC2}).
In the case of the plane, the existence and stability of spot solutions have been discussed in \cite{CSB, WR}.

In this section, we extend their discussions to the case of $\bS^2$. 
We construct stationary spot solutions and discuss the characterization of their linear stability.
We also mention the Fredholm property of a linearized operator around a stationary spot solution, 
which we will use for the implicit function theorem in Section \ref{sec:cri}.

\subsection{Construction of stationary spot solutions}

Let us construct stationary spot solutions for the case $\eps=0$.
In this case, the equation we consider is given as
\be\label{eq:sphe}
u_t(t,\bx) = \int_{\bS^2} K(d_0(\bx,\by))H(u(t,\by)-u_T) ds_0(\by) - u(t,\bx)\quad (t>0,\ \bx\in \bS^2).
\ee
Here, $d_0(\bx,\by)$ is the great-circular distance defined as
\bea
d_0(\bx(\theta,\phi),\by(\ttheta,\tphi)) &:=& \cos^{-1} (\bx(\theta,\phi)\cdot \by(\ttheta,\tphi))\notag  \\
&=& \cos^{-1} (\sin\theta\sin\ttheta\cos(\phi-\tphi)+\cos\theta\cos\ttheta). \label{dist0}
\eea
Furthermore, $ds_0(\by)$ is the element of area on $\bS^2$ defined as
\beaa
ds_0(\by(\ttheta,\tphi)) := \sin\ttheta d\ttheta d\tphi.
\eeaa
Since $\bS^2$ has rotational symmetry,
we only treat stationary spot solutions located at the north pole $N:=(0,0,1)\in \bS^2$.
Note that $d_0(\bx(\theta,\phi),N)=\theta$.

Consider the existence of a stationary spot solution $u(t,\bx) = U({\red d_0}(\bx,N))$ satisfying
\be\label{eq:spot}
U(\theta;\theta_T) 
\begin{cases}
>u_T &(\theta < \theta_T) \\ 
=u_T &(\theta = \theta_T) \\
<u_T &(\theta > \theta_T)
\end{cases}
\ee
for some $\theta_T\in (0,\pi)$.
For $\bx=\bx(\theta,\phi)\in\bS^2$ and $\by=\by(\ttheta,\tphi)\in\bS^2$,
this spot solution is represented by
\bea
U(\theta;\theta_T) &=& \int^{2\pi}_{0}\int^{\theta_T}_{0} K(d_0(\bx(\theta,\phi),\by(\ttheta,\tphi))) \sin\ttheta d\ttheta d\tphi \notag \\
&=& \int^{\theta_T}_{0} \left( \int^{2\pi}_{0} K(\cos^{-1} (\sin\theta\sin\ttheta\cos\tphi+\cos\theta\cos\ttheta)) 
d\tphi \right) \sin \ttheta d\ttheta{\red .} \label{sol}
\eea
Thus, if we find $\theta_T$ satisfying \eqref{eq:spot},
then $U(\theta;\theta_T)$ is a spot solution to \eqref{eq:sphe}.
For general connection kernels, it is not easy to ensure the existence of stationary spot solutions. However, in the following case, the spot solution can be constructed:

\medskip
\begin{example}\label{ex:exi}
    We consider the case where $K(x)$ is given by \eqref{kernel}.
    Then, we obtain
    \beaa
    U(\theta;\theta_T)
    &=& 2\pi(c_0-c_2)(1-\cos\theta_T)
    + \pi (c_1-3c_3)\sin^2\theta_T\cos\theta \\
    &&\quad 
    + 2\pi c_2\Big\{\sin^2\theta_T\cos\theta_T\cos^2\theta 
    + \frac{\cos^3\theta_T - 3\cos\theta_T + 2}{3}\Big\} \\
    &&\quad +\pi c_3\Big\{ 2(1-\cos^4\theta_T)\cos^3\theta
    + 3\sin^4\theta_T\cos\theta\sin^2\theta\Big\}
    \eeaa
    from the direct computation.

    To obtain the existence result,
    we treat the case $c_1>0,c_2=c_3=0$.
    In this case, $U(\theta;\theta_T)$ is a strictly decreasing function with respect to $\theta$.
    Hence, a spot solution satisfying \eqref{eq:spot} is constructed if there is $\theta_T\in(0,\pi)$ such that 
    \be\label{condi:exa}
    U(\theta_T;\theta_T)= 2\pi c_0(1-\cos\theta_T)
    + \pi c_1 \sin^2\theta_T\cos\theta_{T} = u_T
    \ee
    for $u_T\in\bR$.
    We see that the solution $\theta_T\in(0,\pi)$ to \eqref{condi:exa} exists if and only if any of the following conditions is satisfied.
    \begin{itemize}
        \item[(i)] $c_1\leq 2c_0,\ 0<u_T<4\pi c_0$;
        \item[(iia)] $8c_0>c_1> 2c_0$, $c_0+c_1>0$, $0<u_T<4\pi c_0$;
        \item[(iib)] $c_1\ge 8c_0$, $c_0+c_1>0$,
        \beaa
         \pi\left(2c_0-\dfrac{2}{3}\sqrt{\dfrac{c_1-2c_0}{3c_1}}(c_1-2c_0)\right)\leq u_T\leq  \pi\left(2c_0+\dfrac{2}{3}\sqrt{\dfrac{c_1-2c_0}{3c_1}}(c_1-2c_0)\right);
        \eeaa
        \item[(iii)] $c_0+c_1 \leq 0,\quad 4\pi c_0<u_T<0$.
    \end{itemize}
    See Appendix \ref{app:ex} for the derivation of the conditions.
    The existence conditions for some $u_T\in\bR$ are shown in Fig. \ref{fig:exi}.

    In the case that $(c_2,c_3)\neq(0,0)$, we can confirm the existence numerically. Numerical results are presented in Section \ref{sec:num}.
\end{example}

\begin{figure}[bt] 
\begin{center}
	\includegraphics[width=14cm]{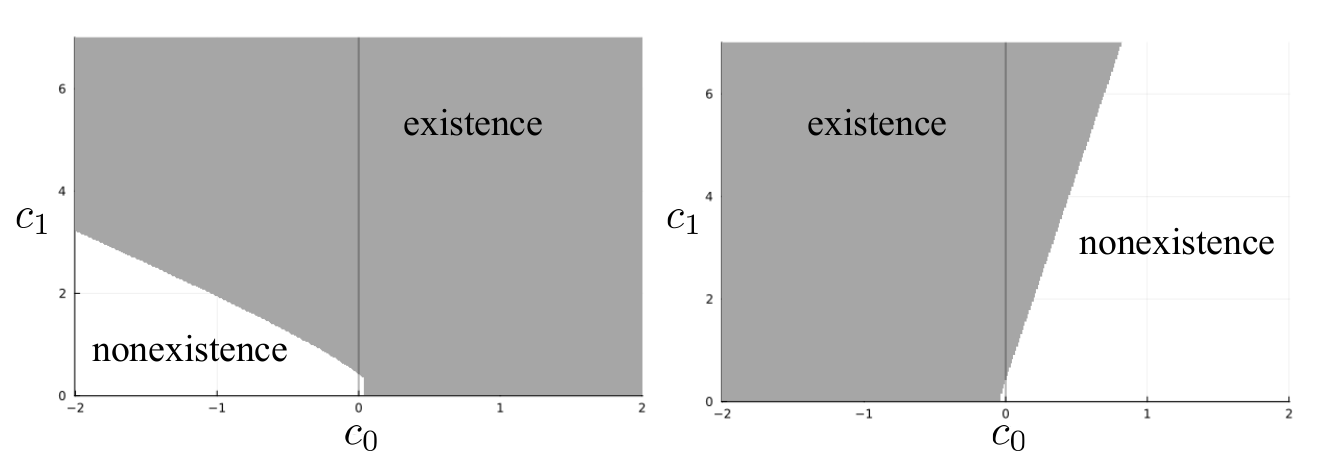}
\end{center}
\caption{\small{
Parameter areas for the existence of stationary spot solutions \eqref{eq:spot} when the kernel is \eqref{kernel} with $c_1>0$ and $c_2=c_3=0$.
Left: $u_T=0.5$. Right: $u_T=-0.5$.
}}
\label{fig:exi}
\end{figure}

\medskip
\subsection{Stability of stationary spot solutions}

Let us discuss the linearized stability of the stationary spot solutions constructed in the previous subsection.
To define the linearized operator around the spot solution,
we introduce the following assumption: 
\begin{assumption}\label{ass:exi}
There exists a stationary spot solution $U(\theta)=U(\theta;\theta_T)$ to \eqref{eq:sphe} satisfying \eqref{eq:spot} and
$U'(\theta_T)< 0$.
\end{assumption}
Note that a stationary spot solution satisfying \eqref{eq:spot} implies that $U'(\theta_T)\le 0$ is satisfied.
As we show below, our linearized stability analysis is not applicable for the case $U'(\theta_T)=0$,
and thus we introduce Assumption \ref{ass:exi}.
Stationary spot solutions that satisfy this assumption were constructed in Example \ref{ex:exi}.

\medskip
Let Assumption \ref{ass:exi} be satisfied.
Then, we define the linearized operator $L_0=L_0(\theta_T):C(\bS^2)\to C(\bS^2)$ around the spot solution as 
\beaa
L_0 v(\bx) := \int_{\bS^2} K(d_0(\bx,\by))\delta(U(d_0(\by,N))-u_T) v(\by) ds_0(\by) - v(\bx).
\eeaa
Here, $\delta(x)$ is Dirac's delta function.
Define the integral operator
$\cK=\cK(\theta_T):C(\bS^2)\to C(\bS^2)$ as
\beaa
\cK v(\bx) := \int_{\bS^2}K(d_0(\bx,\by))\delta(U(d_0(\by,N))-u_T) v(\by) ds_0(\by).
\eeaa
For $\bx=\bx(\theta,\phi),\ \by=\by(\ttheta,\tphi)\in\bS^2$,
we have
\beaa
\cK v(\bx(\theta,{\red \phi})) 
&=& \int^{\pi}_{0}\int^{2\pi}_{0} K(d_0(\bx(\theta,\phi),\by(\ttheta,\tphi)))\delta(U(\ttheta)-u_T) v(\by(\ttheta,\tphi)) \sin\ttheta d\tphi d\ttheta\\
&=& \dfrac{\sin\theta_T}{|U'(\theta_T)|} \int^{2\pi}_{0} K(d_0(\bx(\theta,\phi),\by(\theta_T,\tphi))) v(\by(\theta_T,\tphi)) d\tphi.
\eeaa
Note that $\cK$ is well defined from Assumption \ref{ass:exi}.

Our goal of this subsection is to compute the spectrum of $L_0$ in order to characterize the linearized stability of the spot solution.
Since $\cK = L_0 + I$ holds,
it suffices to analyze the spectrum of $\cK$ 
for the stability problem.
Denote the spectrum of $\cK:C(\bS^2)\to C(\bS^2)$ by $\sigma(\cK)$.
We note that $\lambda\in\sigma(\cK)$ with $\lambda\neq 0$ is an eigenvalue, since $\cK$ is a compact operator \cite[Theorem 3.9]{Kress}.
Moreover, $0\in \sigma(\cK)$ holds.

To analyze the eigenvalues of $\cK$,
we introduce the properties of spherical harmonics.
Spherical harmonics $Y^m_l=Y^m_l(\bx(\theta,\phi))\ (l=0,1,2\ldots,\ |m|\le l)$ are defined as follows:
\beaa
Y^m_l({\red \bx}(\theta,\phi)):=(-1)^m\frac{C^{|m|}_l}{\sqrt{2\pi}} e^{im\phi}P^{|m|}_l(\cos\theta),
\eeaa
where $P^m_l(x)\ (l=0,1,2\ldots,\ 0\le m\le l)$ is the associated Legendre polynomial
\beaa
P^m_l(x)&:=& \dfrac{1}{2^l l!}(1-x^2)^{m/2}
\dfrac{d^{m+l}}{dx^{m+l}}\left[(x^2-1)^l\right],
\eeaa
and $C^{m}_l\ (l=0,1,2\ldots,\ 0\le m\le l)$ is a normalization constant
\beaa
C^m_l &:=\sqrt{\left(l+\dfrac{1}{2}\right)\dfrac{(l-m)!}{(l+m)!}} \quad(m\ge 0).
\eeaa
Then, spherical harmonics have the following properties.
\begin{proposition}[\cite{AT}] \label{prop:sphe}
$\{Y^m_l(\bx)\mid l=0,1,2\ldots,\ |m|\le l\}$ is an orthonormal system in $L^2(\bS^2)$. Moreover, we have the following:
\begin{itemize}
    \item[(i)] $\mathrm{span}\{Y^m_l(\bx)\mid l=0,1,2\ldots,\ |m|\le l\}$ is dense in $C(\bS^2)$;
    \item[(ii)] Legendre polynomials $P_l(x):= P^{0}_l(x)\ (l\ge 0)$ are represented by 
    \beaa
    P_l(\bx\cdot\by)=\frac{4\pi}{2l+1}\sum^l_{m=-l}Y^m_l(\bx)\overline{Y^m_l(\by)}, \quad (\bx, \by\in \bS^2).
    \eeaa
\end{itemize}
\end{proposition}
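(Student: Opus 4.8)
These are classical facts about spherical harmonics, and the plan is to prove the three assertions in turn using the orthogonality of the associated Legendre polynomials and the rotation invariance of the Laplace--Beltrami operator on $\bS^2$. First I would establish orthonormality by a direct computation, separating the azimuthal and polar integrals. Writing the $L^2(\bS^2)$ pairing of $Y^m_l$ with $Y^{m'}_{l'}$ as $\int^{2\pi}_0\int^{\pi}_0 Y^m_l\,\overline{Y^{m'}_{l'}}\,\sin\theta\,d\theta\,d\phi$, the factor $\int^{2\pi}_0 e^{i(m-m')\phi}\,d\phi = 2\pi\delta_{mm'}$ forces $m=m'$, and the substitution $x=\cos\theta$ reduces the polar integral to $\int^{1}_{-1}P^{|m|}_l(x)P^{|m|}_{l'}(x)\,dx$. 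Invoking the classical relation $\int^{1}_{-1}P^m_l P^m_{l'}\,dx = \frac{2}{2l+1}\frac{(l+m)!}{(l-m)!}\delta_{ll'}$, the normalization constants $C^m_l$ are seen to be chosen exactly so that the diagonal pairing equals $1$.

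For part (i) the plan is to pass through polynomials. By the Stone--Weierstrass theorem the restrictions to $\bS^2$ of polynomials in $(x,y,z)$ are dense in $C(\bS^2)$, since they form a subalgebra that contains the constants and separates points. It then remains to check that every such restriction lies in the span of the $Y^m_l$: decomposing a homogeneous polynomial of degree $k$ into harmonic components, $p_k = \sum_{j\ge 0}|\bx|^{2j} h_{k-2j}$ with each $h_i$ harmonic and homogeneous of degree $i$, and restricting to $|\bx|=1$, reduces matters to the fact that harmonic homogeneous polynomials of degree $i$ restrict exactly to $\mathrm{span}\{Y^m_i : |m|\le i\}$. Hence the polynomial restrictions, and therefore their dense span, are contained in $\mathrm{span}\{Y^m_l\}$, giving density in $C(\bS^2)$.

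For the addition theorem (ii), set $\cH_l:=\mathrm{span}\{Y^m_l : |m|\le l\}$ and $Z_l(\bx,\by):=\sum^l_{m=-l}Y^m_l(\bx)\overline{Y^m_l(\by)}$, the reproducing kernel of $\cH_l$. I would first record that $\cH_l$ is $\mathrm{SO}(3)$-invariant, since rotations commute with the Laplace--Beltrami operator and preserve degree, and that $\mathrm{SO}(3)$ acts on $\cH_l$ unitarily, so that $Z_l(R\bx,R\by)=Z_l(\bx,\by)$ for all $R\in\mathrm{SO}(3)$. Since $\mathrm{SO}(3)$ acts transitively on pairs of points at a fixed angular separation, $Z_l$ depends on $(\bx,\by)$ only through $\bx\cdot\by$; being moreover a member of $\cH_l$ in its first argument, it must be a constant multiple of the zonal harmonic $P_l(\bx\cdot\by)$. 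The constant is then fixed by integrating the diagonal: $\int_{\bS^2}Z_l(\bx,\bx)\,ds_0 = \sum^l_{m=-l}\|Y^m_l\|^2_{L^2} = 2l+1$, whereas $\int_{\bS^2}P_l(\bx\cdot\bx)\,ds_0 = 4\pi P_l(1)=4\pi$, which yields the prefactor $\frac{4\pi}{2l+1}$.

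The routine part is the orthonormality computation; the substantive obstacle is the rotation-invariance input in (ii), namely that $\cH_l$ carries a unitary representation of $\mathrm{SO}(3)$ and that a bi-rotation-invariant kernel on $\bS^2\times\bS^2$ necessarily depends only on $\bx\cdot\by$. Once these structural properties of the $\mathrm{SO}(3)$-action on $\cH_l$ are available, the identification of $Z_l$ with a multiple of $P_l$ and the evaluation of the constant follow immediately.
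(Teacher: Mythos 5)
Your argument is correct in all three parts. Bear in mind, though, that the paper itself offers no proof of this proposition: it is quoted directly from the reference \cite{AT}, so there is no internal argument to compare yours against; what you have written is essentially the standard textbook treatment found in that source. Specifically, your orthonormality computation is right --- the azimuthal integral forces $m=m'$, and the paper's normalization $C^{m}_{l}=\sqrt{\left(l+\frac{1}{2}\right)(l-m)!/(l+m)!}$ exactly cancels the factor $\frac{2}{2l+1}\frac{(l+m)!}{(l-m)!}$ coming from the associated Legendre orthogonality relation. The Stone--Weierstrass argument combined with the decomposition $p_k=\sum_{j\ge 0}|\bx|^{2j}h_{k-2j}$ of homogeneous polynomials into harmonic components is the classical route to (i), and your proof of the addition theorem --- rotation-invariance of the reproducing kernel $Z_l$ of $\cH_l$, reduction to a function of $\bx\cdot\by$ by transitivity of $\mathrm{SO}(3)$ on pairs at fixed angular separation, one-dimensionality of the zonal subspace, and fixing the constant from $\int_{\bS^2}Z_l(\bx,\bx)\,ds_0(\bx)=2l+1=4\pi c$ with $P_l(1)=1$ --- is the standard argument. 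The only steps you leave implicit, namely that $Z_l$ is independent of the choice of orthonormal basis of $\cH_l$ (which is what actually converts unitarity of the $\mathrm{SO}(3)$-action into $Z_l(R\bx,R\by)=Z_l(\bx,\by)$), and that invariance under rotations fixing $\by$ annihilates all $m\neq 0$ Fourier modes in the pole-$\by$ basis, are routine and would complete the write-up.
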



By applying the polynomial approximation theory and Proposition \ref{prop:sphe}, 
there are real constants $k_n\ (n=0,1,\ldots)$ such that
for any $\bx,\by\in\bS^2$, we obtain
\bea
K(d_0(\bx,\by)) &=& K(\cos^{-1}(\bx\cdot\by)) 
= \dsum^{\infty}_{n=0} k_n P_n (\bx\cdot\by) \notag \\ 
&=&  4\pi \dsum^{\infty}_{n=0}\frac{k_n}{2n+1}\sum^n_{m=-n}Y^m_n(\bx)\overline{Y^m_n(\by)} \label{kernel-exp}
\eea
in the sense of the uniform convergence.
Then, we find
\beaa
K(d_0(\bx(\theta,\phi),\by(\ttheta,\tphi)) &=&  2 \dsum^{\infty}_{n=0}\frac{k_n}{2n+1}\sum^n_{m=-n}(C^{|m|}_{n})^2 P^{|m|}_{n}(\cos\theta) P^{|m|}_{n}(\cos\ttheta)  e^{i m(\phi-\tphi)} \notag \\
&=&  2 \dsum_{m\in\bZ} \left( \sum^{N}_{n=|m|} \frac{k_n}{2n+1}(C^{|m|}_{n})^2 P^{|m|}_{n}(\cos\theta) P^{|m|}_{n}(\cos\ttheta) \right) e^{im(\phi-\tphi)} \notag \\
&=&  2 \left( \sum^{\infty}_{n=0} \frac{k_n}{2n+1}(C^{0}_{n})^2 P_{n}(\cos\theta) P_{n}(\cos\ttheta) \right) \notag \\
&& \quad + 4 \dsum^{\infty}_{m=1} \left( \sum^{\infty}_{n=m} \frac{k_n}{2n+1}(C^{m}_{n})^2 P^{m}_{n}(\cos\theta) P^{m}_{n}(\cos\ttheta) \right)  \cos m(\phi-\tphi). 
\eeaa
Here, by setting
\beaa
K_0(\theta,\ttheta) &:=& 2 \sum^{\infty}_{n=0} \frac{k_n}{2n+1}(C^{0}_{n})^2 P_{n}(\cos\theta) P_{n}(\cos\ttheta), \\ 
K_m(\theta,\ttheta) &:=& 4 \sum^{\infty}_{n=m} \frac{k_n}{2n+1}(C^{m}_{n})^2 P^{m}_{n}(\cos\theta) P^{m}_{n}(\cos\ttheta)\quad (m\ge 1),
\eeaa
we have
\be\label{exp}
K(d_0(\bx(\theta,\phi),\by(\ttheta,\tphi))= \dsum^{\infty}_{m=0} K_m(\theta,\ttheta)\cos m(\phi-\tphi).
\ee

With the above preparations, let us analyze the eigenvalues of $\cK$.
First, since the spot solution can be constructed with any position on the spherical surface as its center, there are two degrees of freedom associated with its location.
Thus, $L_0$ has the eigenvalue $0$ corresponding to degrees of freedom. 
This implies that $\cK$ has the following eigenvalues and eigenfunctions corresponding to the translative invariance.
\begin{proposition}\label{prop:tra}
Functions
\be\label{trans}
v^1(\bx(\theta,\phi))= U'(\theta)\cos\phi,\quad v^2(\bx(\theta,\phi)) = U'(\theta)\sin\phi
\ee
are eigenfunctions corresponding to the eigenvalue $\lambda=1$ of $\cK$.
Moreover, we find
\be\label{sol:deri}
U'(\theta) = -\pi\sin\theta_T K_1(\theta,\theta_T).
\ee
\end{proposition}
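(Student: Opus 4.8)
The plan is to use the rotational symmetry of \eqref{eq:sphe} to generate the eigenvalue $\lambda=1$, then to identify the corresponding eigenfunctions explicitly by a short computation with the Fourier expansion \eqref{exp}, which simultaneously delivers \eqref{sol:deri}. Since the great-circular distance $d_0$ and the area element $ds_0$ are invariant under rotations of $\bS^2$, whenever $U(d_0(\bx,N))$ is a stationary solution so is $U(d_0(\bx,\bn))$ for every $\bn\in\bS^2$. I would therefore take a one-parameter rotation family $\bn(s)$ with $\bn(0)=N$ and differentiate the stationary identity
\beaa
U(d_0(\bx,\bn(s)))=\int_{\bS^2}K(d_0(\bx,\by))H(U(d_0(\by,\bn(s)))-u_T)\,ds_0(\by)
\eeaa
at $s=0$. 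Writing $v(\bx):=\partial_s U(d_0(\bx,\bn(s)))|_{s=0}$ and using $H'=\delta$, the right-hand side formally becomes $\cK v(\bx)$, so that $\cK v=v$; that is, $v$ is an eigenfunction for $\lambda=1$.

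To pin down $v$, I would compute the geodesic-distance derivative. Taking $\bn(s)=(\sin s,0,\cos s)$ (rotation about the $y$-axis) gives $\partial_s(\bx\cdot\bn(s))|_{s=0}=\sin\theta\cos\phi$, and since $\partial_a\cos^{-1}a=-1/\sqrt{1-a^2}$ with $a=\cos\theta$, one obtains $\partial_s d_0(\bx,\bn(s))|_{s=0}=-\cos\phi$, hence $v=-U'(\theta)\cos\phi$. Taking instead $\bn(s)=(0,-\sin s,\cos s)$ (rotation about the $x$-axis) yields $\partial_s d_0|_{s=0}=\sin\phi$ and $v=U'(\theta)\sin\phi$. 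As eigenspaces are linear, both $v^1=U'(\theta)\cos\phi$ and $v^2=U'(\theta)\sin\phi$ are eigenfunctions of $\cK$ for $\lambda=1$, which is the first assertion.

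For \eqref{sol:deri}, I would evaluate $\cK v^1$ directly from the reduced form of $\cK$ together with the expansion \eqref{exp}. Substituting $v^1(\by(\theta_T,\tphi))=U'(\theta_T)\cos\tphi$, the orthogonality $\int_0^{2\pi}\cos m(\phi-\tphi)\cos\tphi\,d\tphi=\pi\cos\phi\,\delta_{m,1}$ leaves only the $m=1$ term; together with $U'(\theta_T)/|U'(\theta_T)|=-1$ (where Assumption \ref{ass:exi} enters) this gives $\cK v^1(\bx)=-\pi\sin\theta_T K_1(\theta,\theta_T)\cos\phi$. Comparing with the eigenvalue relation $\cK v^1=v^1=U'(\theta)\cos\phi$ established above yields \eqref{sol:deri}.

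The main obstacle is the differentiation step, since $H$ is a Heaviside function: its support $\{d_0(\by,\bn(s))<\theta_T\}$ is a geodesic disk whose boundary circle moves with $s$, so differentiating under the integral is really a moving-domain (Leibniz) computation, and one must show that the boundary contribution coincides with the $\delta(U-u_T)$ term defining $\cK$. This is exactly where the non-degeneracy $U'(\theta_T)<0$ of Assumption \ref{ass:exi} is indispensable: it makes $\{U=u_T\}$ a transversally crossed circle carrying surface measure with the factor $\sin\theta_T/|U'(\theta_T)|$. A rigorous route that sidesteps this is to prove \eqref{sol:deri} first by a direct Legendre-series computation, differentiating $U(\theta)=2\pi\int_0^{\theta_T}K_0(\theta,\ttheta)\sin\ttheta\,d\ttheta$, using $\partial_\theta P_n(\cos\theta)=-P^1_n(\cos\theta)$, the primitive $\int_{\cos\theta_T}^{1}P_n\,dx=\frac{\sin\theta_T}{n(n+1)}P^1_n(\cos\theta_T)$, and the coefficient identity $(C^1_n)^2=(C^0_n)^2/(n(n+1))$, and then reading $\cK v^1=v^1$ off the expansion.
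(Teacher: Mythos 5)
Your proposal is correct and takes essentially the same route as the paper: you differentiate the rotated family of stationary spot solutions $U(d_0(\bx,\bn(s)))$ at $s=0$ to produce eigenfunctions of $\cK$ with eigenvalue $\lambda=1$ (the paper does this with the center $C(\theta_c,\phi_c)$ and $\partial/\partial\theta_c$), and then evaluate $\cK v^1$ through the expansion \eqref{exp} so that only the $m=1$ mode survives, with $U'(\theta_T)/|U'(\theta_T)|=-1$ from Assumption \ref{ass:exi} yielding \eqref{sol:deri}. Your closing remarks on the moving-boundary justification of differentiating through the Heaviside term, and the alternative Legendre-series derivation, add rigor that the paper leaves implicit but do not constitute a different argument.
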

\begin{proof}
    Since the spot solution can be translated due to the symmetry of the sphere, 
    $U(d_0(\bx,C))$ for $C\in\bS^2$ is also a stationary spot solution to \eqref{eq:sphe}.
    That is, 
    \beaa
    U(d_0(\bx,C))=\int_{\bS^2} K(d_0(\bx,\by))H(U(d_0(\by,C)-u_T)) ds_0(\by)
    \eeaa
    holds for any $C=C(\theta_c,\phi_c)$.
    By calculating the derivative of $U(d_0(\bx,C(\theta_c,\phi_c)))$ with respect to $\theta_c=0$, we obtain
    \beaa
    \dlim_{\theta_c\to +0}\dfrac{\partial}{\partial \theta_c} [U(d_0(\bx(\theta,\phi),C(\theta_c,\phi_c))] =-\cos(\phi-\phi_c)U'(\theta).
    \eeaa
    Thus, $v^1(\bx)$ and $v^2(\bx)$ are eigenfunctions corresponding to the eigenvalue $\lambda=1$ of $\cK$.
    Also, we find 
    \beaa
    {\red v^1}(\bx(\theta,\phi))= \cK v^1(\bx(\theta,\phi)) 
    =-\pi\sin\theta_T K_1(\theta,\theta_T)\cos\phi.
    \eeaa
    Hence \eqref{sol:deri} is obtained.
\end{proof}
\begin{remark}
    Under Assumption \ref{ass:exi}, $K_{1}(\theta_T,\theta_T)$ is positive from \eqref{sol:deri} since $\theta_T\in(0,\pi)$ holds.
\end{remark}

\medskip
Next, using the spherical harmonic expansion, all elements of $\sigma(\cK)$ are computed as follows.
\begin{proposition}\label{prop:sta}
$\sigma(\cK)$ is represented by
\beaa
\sigma(\cK) = \{0,1\} 
\cup \left\{ \dfrac{2 K_0(\theta_T,\theta_T)}{K_1(\theta_T,\theta_T)} \right\}
\cup \left\{ \dfrac{K_m(\theta_T,\theta_T)}{K_1(\theta_T,\theta_T)}\ \Big|\ m\ge 2 \right\}.
\eeaa
\end{proposition}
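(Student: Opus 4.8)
The plan is to diagonalize $\cK$ by exploiting its invariance under rotations about the polar axis, which is made explicit through the azimuthal expansion \eqref{exp} of the kernel. First I would rewrite the prefactor of $\cK$: by \eqref{sol:deri} and Assumption \ref{ass:exi} one has $|U'(\theta_T)|=\pi\sin\theta_T\,K_1(\theta_T,\theta_T)$, so that $\sin\theta_T/|U'(\theta_T)|=1/(\pi K_1(\theta_T,\theta_T))$, and in particular $K_1(\theta_T,\theta_T)>0$. Substituting \eqref{exp} into the one-dimensional integral defining $\cK$, the operator acts on $v\in C(\bS^2)$ only through the restriction $\tphi\mapsto v(\by(\theta_T,\tphi))$ of $v$ to the circle $\ttheta=\theta_T$.

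Next I would expand $v$ in azimuthal Fourier modes. Writing $v(\bx(\theta,\phi))=\sum_{m\ge0}[\alpha_m(\theta)\cos m\phi+\beta_m(\theta)\sin m\phi]$ and using the orthogonality of $\{\cos m\phi,\sin m\phi\}$ on $[0,2\pi)$ to carry out the $\tphi$-integration termwise, I expect
\[
\cK v(\bx(\theta,\phi))=\frac{1}{K_1(\theta_T,\theta_T)}\Big[2\alpha_0(\theta_T)K_0(\theta,\theta_T)+\sum_{m\ge1}K_m(\theta,\theta_T)\big(\alpha_m(\theta_T)\cos m\phi+\beta_m(\theta_T)\sin m\phi\big)\Big].
\]
This is the crucial structural fact: on each azimuthal subspace $\cK$ is effectively rank one, sending any radial profile $w(\theta)$ to a fixed multiple of $K_m(\cdot,\theta_T)$ determined only by the boundary value $w(\theta_T)$.

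Then, for an eigenpair $\cK v=\lambda v$ with $\lambda\ne0$, I would match Fourier modes in $\phi$, using uniqueness of Fourier coefficients at each fixed $\theta$. This yields, for every $m\ge1$ with $\alpha_m\not\equiv0$, the identity $\lambda\alpha_m(\theta)=\alpha_m(\theta_T)K_m(\theta,\theta_T)/K_1(\theta_T,\theta_T)$; evaluating at $\theta=\theta_T$ forces $\alpha_m(\theta_T)\ne0$ and hence $\lambda=K_m(\theta_T,\theta_T)/K_1(\theta_T,\theta_T)$, with the analogous computation for $\beta_m$, and $\lambda=2K_0(\theta_T,\theta_T)/K_1(\theta_T,\theta_T)$ in the axisymmetric mode $m=0$. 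Conversely each value is attained by the eigenfunction $K_m(\theta,\theta_T)\cos m\phi$, which lies in $C(\bS^2)$ since $K_m(\theta,\theta_T)$ vanishes at the poles for $m\ge1$. Since $\cK$ is compact, every nonzero spectral point is such an eigenvalue, while $0\in\sigma(\cK)$ because $\cK$ has infinite-dimensional kernel; noting $\lambda_1=1$ recovers Proposition \ref{prop:tra} and assembles the claimed $\sigma(\cK)$.

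The main obstacle I anticipate is rigor rather than computation: justifying the termwise $\tphi$-integration requires the uniform convergence of \eqref{exp}, and the mode-matching argument needs the completeness of the azimuthal system together with care that the radial profiles $\alpha_m,\beta_m$ of a continuous eigenfunction are themselves continuous and that the exceptional cases (say $\alpha_m(\theta_T)=0$ while $\alpha_m\not\equiv0$) are correctly excluded, as these force $\lambda\alpha_m\equiv0$ and therefore contribute only to the kernel. One should also confirm that no nonzero spectral value is lost through the behaviour of the candidate eigenfunctions at the poles.
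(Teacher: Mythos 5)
Your proposal is correct and takes essentially the same route as the paper's proof: both diagonalize $\cK$ by azimuthal separation of variables through \eqref{exp}, reduce to the prefactor identity $\sin\theta_T/|U'(\theta_T)|=1/(\pi K_1(\theta_T,\theta_T))$ from \eqref{sol:deri}, perform the $\tphi$-integration to get a Kronecker-delta (rank-one per mode) structure, and evaluate at $\theta=\theta_T$ to extract the eigenvalues, with compactness supplying $0\in\sigma(\cK)$. The only differences are cosmetic: the paper organizes the decomposition via the invariant subspaces $E_\alpha$ spanned by spherical harmonics $g_\alpha(\theta)e^{i\alpha\phi}$, whereas you use real Fourier modes and are somewhat more explicit about the converse inclusion (exhibiting the eigenfunctions $K_m(\theta,\theta_T)\cos m\phi$) and about excluding the exceptional case $\alpha_m(\theta_T)=0$ with $\alpha_m\not\equiv 0$.
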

\begin{proof}
Since $\cK$ is a compact operator on $C(\bS^2)$, it follows that $0\in\sigma(\cK)$, and the rest of the spectrum is an eigenvalue.
Thus, it is enough to discuss the eigenvalue of $\sigma(\cK)$.
For $\alpha\in\bZ$, we define
\beaa
E_{\alpha} := \overline{\mathrm{span}\{ Y^{\alpha}_{\beta}(\bx) \mid \beta \ge |\alpha|\}}\ \mathrm{in}\ C(\bS^2).
\eeaa
For every $Y^{\alpha}_{\beta}(\bx)\ (\beta=0,1,2\ldots,\ |\alpha|\le \beta)$,
\beaa
\cK Y^{\alpha}_{\beta}(\bx) 
&=& \dfrac{4\pi \sin\theta_T}{|U'(\theta_T)|}  \dsum^{\infty}_{n=0} \frac{k_n}{2n+1}\sum^n_{m=-n}  Y^m_n(\bx) \int^{2\pi}_{0} Y^{\alpha}_{\beta}(\by(\theta_T,\tphi))\overline{Y^m_n(\by(\theta_T,\tphi))} d\tphi 
\eeaa
holds from \eqref{kernel-exp}.
From the definition of spherical harmonics,
we obtain
\beaa
\int^{2\pi}_{0} Y^{\alpha}_{\beta}(\by(\theta_T,\tphi))\overline{Y^m_n(\by(\theta_T,\tphi))} d\tphi
&=& \dfrac{(-1)^{\alpha-m}}{2\pi} C^{|\alpha|}_{\beta} C^{|m|}_{n} P^{|\alpha|}_{\beta}(\cos\theta_T)  P^{|m|}_{n}(\cos\theta_T)  \int^{2\pi}_{0} e^{i(\alpha-m)\tphi} d\tphi \\
&=& 
\begin{cases}
     C^{|\alpha|}_{\beta} C^{|\alpha|}_{n} P^{|\alpha|}_{\beta}(\cos\theta_T)  P^{|\alpha|}_{n}(\cos\theta_T) & (\alpha=m) \\
    0 &(\alpha\neq m).
\end{cases}
\eeaa
This means $\cK : E_{\alpha}\to E_{\alpha}$.
Denote the restricted operator of $\cK$ on $E_{\alpha}$ by $\cK_{\alpha}$.  

We know from Proposition \ref{prop:sphe} that 
\beaa
E_{\alpha_{1}} \perp E_{\alpha_{2}}\quad \mathrm{in}\ L^2(\bS^2)
\eeaa
holds for distinct integers $\alpha_1, \alpha_2$.
Thus, if $v\in C(\bS^2)$ is an eigenfunction corresponding to the {\red eigenvalue} $\lambda\in \sigma(\cK)$,
then there exists $\alpha\in\bZ$ such that $\lambda$ is an eigenvalue of $\cK_{\alpha}$.
Therefore, it suffices to analyze the eigenvalue of the operator $\cK : E_{\alpha}\to E_{\alpha}$ for all $\alpha\in\bZ$.

Fix $\alpha\in\bZ$.
Let $\lambda$ be an eigenvalue of $\cK_{\alpha}$ and
$v_{\alpha}\in E_{\alpha}$ be the eigenfunction.
Then there are $v_{\alpha,\beta}\in\bR\ (\beta\ge |\alpha|)$ such that
$v_{\alpha}(\bx(\theta,\phi))= \dsum^{\infty}_{\beta=|\alpha|} v_{\alpha,\beta} Y^{\alpha}_{\beta}(\bx)$.
Moreover, setting
\beaa
g_{\alpha}(\theta) := \dsum^{\infty}_{\beta=|\alpha|} (-1)^{\alpha}  \dfrac{C^{|\alpha|}_{\beta} v_{\alpha,\beta}}{\sqrt{2\pi}} P^{|\alpha|}_{\beta}(\cos\theta)
\eeaa
yields $g_{\alpha}\in C[0,\pi]$ and
\beaa
v_{\alpha}(\bx(\theta,{\red \phi})) = g_{\alpha}(\theta) e^{i\alpha\phi}.
\eeaa
Thus, we find
\beaa
\cK v_{\alpha}(\bx(\theta,\phi)) &=&  \dfrac{\sin\theta_T}{|U'(\theta_T)|} g_{\alpha}(\theta_T)\dsum^{\infty}_{m=0} K_m(\theta,\theta_T) \int^{2\pi}_{0} \cos m(\phi-\tphi) e^{i\alpha\tphi} d\tphi.
\eeaa
Note that
\beaa
\int^{2\pi}_{0} \cos m(\phi-\tphi) e^{i\alpha\tphi} d\tphi = 
\begin{cases}
    2\pi \delta_{0,m} e^{im\phi} & (\alpha =0), \\
    \pi \delta_{|\alpha|,m} e^{im\phi} &(\alpha\neq 0),
\end{cases}
\eeaa
where $\delta_{\alpha,m}$ is Kronecker delta.

Here, we have 
$|U'(\theta_T)| = \pi \sin\theta_{T} K_1(\theta_T,\theta_T)$
from \eqref{sol:deri}.
When $\alpha=0$, it follows that
\beaa
\dfrac{2K_0(\theta,\theta_T)}{K_1(\theta_T,\theta_T)} g_{0}(\theta_T) = \lambda g_{0}(\theta)
\eeaa
and conclude that
\beaa
\dfrac{2K_0(\theta_T,\theta_T)}{K_1(\theta_T,\theta_T)} \in\sigma(\cK).
\eeaa
In the case that $\alpha\neq 0$,
we obtain
\beaa
\dfrac{K_{|\alpha|}(\theta_T,\theta_T)}{K_1(\theta_T,\theta_T)} \in \sigma(\cK)
\eeaa
in the same way.

Therefore, the desired assertion is shown, since all elements in $\sigma(\cK)$ has been computed from the above discussion.
\end{proof}

\begin{remark}
    In $\sigma(\cK)$ shown in Proposition \ref{prop:sta}, the case $m=1$ corresponds to the eigenvalue shown in Proposition \ref{prop:tra}.
\end{remark}

\medskip
Let us discuss the linearized stability of the spot solution. 
In this case, the spectrum of $L_0$ must have a negative real part, except for the eigenvalue corresponding to the translation degrees of freedom.
It is equivalent to the real part of the eigenvalues of $\cK$ being less than $1$.
Thus, using the fact that $K_{m}(\theta_T,\theta_T)\ (m\ge 0)$ are real constants, 
we obtain the linearized stability condition for a spot solution to \eqref{eq:sphe} under Assumption \ref{ass:exi} as
\be\label{condi:sta}
\dfrac{2K_0(\theta_T,\theta_T)}{K_1(\theta_T,\theta_T)} <1,\quad
\dfrac{K_m(\theta_T,\theta_T)}{K_1(\theta_T,\theta_T)} <1\quad (m\ge 2)
\ee
from \eqref{sol:deri}.
Since the spectrum has been calculated by concrete computations, the following fact is easily derived.

\begin{corollary}\label{cor:ker}
    Let Assumption \ref{ass:exi} be satisfied.
    If the spot solution \eqref{eq:spot} satisfies \eqref{condi:sta},
    then 
    \beaa
    \mathrm{Ker} L_0 = \mathrm{span}\{v^1,v^2\}
    \eeaa
    holds, where $v^{j}(\bx)\ (j=1,2)$ are defined by \eqref{trans}. 
\end{corollary}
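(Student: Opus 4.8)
The plan is to observe that $L_0 = \cK - I$, so that $\mathrm{Ker}\,L_0$ is precisely the eigenspace of $\cK$ associated with the eigenvalue $\lambda = 1$. The inclusion $\mathrm{span}\{v^1, v^2\} \subseteq \mathrm{Ker}\,L_0$ is then immediate from Proposition \ref{prop:tra}, which already exhibits $v^1$ and $v^2$ as eigenfunctions of $\cK$ for $\lambda = 1$. The real content of the corollary is the reverse inclusion, namely that this eigenspace is no larger than two-dimensional. This is not a fresh computation but a bookkeeping refinement of the proof of Proposition \ref{prop:sta}, in which the complete eigenstructure of $\cK$ was already laid out.

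First I would invoke the orthogonal decomposition $C(\bS^2) = \overline{\bigoplus_{\alpha\in\bZ} E_\alpha}$ together with the invariance $\cK : E_\alpha \to E_\alpha$ established in Proposition \ref{prop:sta}. Because the subspaces $E_\alpha$ are mutually orthogonal in $L^2(\bS^2)$, any $v \in \mathrm{Ker}\,L_0$ splits into components $v_\alpha \in E_\alpha$, each of which is either zero or itself an eigenfunction of the restriction $\cK_\alpha$ for the eigenvalue $1$. It therefore suffices to identify, for each $\alpha$, the $\lambda = 1$ eigenspace inside $E_\alpha$.

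Next I would return to the scalar eigenvalue relation derived in that proof, $\lambda\, g_\alpha(\theta)\, K_1(\theta_T,\theta_T) = g_\alpha(\theta_T)\, K_{|\alpha|}(\theta,\theta_T)$ for $\alpha \ne 0$ (with $K_{|\alpha|}$ replaced by $2K_0$ when $\alpha = 0$). For $\lambda \ne 0$ this forces $g_\alpha$ to be a scalar multiple of $K_{|\alpha|}(\cdot,\theta_T)$, so the eigenspace attached to the unique nonzero eigenvalue of $\cK_\alpha$ is at most one-dimensional, and that eigenvalue is exactly $\lambda_\alpha := K_{|\alpha|}(\theta_T,\theta_T)/K_1(\theta_T,\theta_T)$ (respectively $2K_0(\theta_T,\theta_T)/K_1(\theta_T,\theta_T)$ for $\alpha = 0$). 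The key step is then to use the \emph{strict} inequalities in the stability condition \eqref{condi:sta}, which give $\lambda_0 < 1$ and $\lambda_m < 1$ for every $m \ge 2$; consequently $\lambda_\alpha = 1$ holds if and only if $|\alpha| = 1$. Hence only the components $v_{\pm 1}$ can be nonzero, each lying in the one-dimensional space spanned by $K_1(\cdot,\theta_T)e^{\pm i\phi}$.

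Finally I would apply \eqref{sol:deri}, namely $U'(\theta) = -\pi\sin\theta_T\,K_1(\theta,\theta_T)$, to rewrite these spanning functions as multiples of $U'(\theta)e^{\pm i\phi}$; restricting to the real-valued functions in $C(\bS^2)$ yields exactly $\mathrm{span}_{\bR}\{U'(\theta)\cos\phi,\, U'(\theta)\sin\phi\} = \mathrm{span}\{v^1, v^2\}$, which closes the reverse inclusion. I expect no serious obstacle, since every ingredient is already present in Proposition \ref{prop:sta} and its proof; the only point demanding care is that the inequalities in \eqref{condi:sta} must be strict, as this is precisely what isolates the modes $|\alpha| = 1$ and prevents the eigenvalue $1$ from being contributed by any other mode.
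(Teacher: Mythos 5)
Your proposal is correct and follows essentially the same route as the paper: the paper states the corollary is ``easily derived'' from the concrete spectral computation, and your argument is exactly that computation made explicit --- the $E_\alpha$-decomposition and scalar relation from the proof of Proposition \ref{prop:sta} show each nonzero eigenvalue of $\cK_\alpha$ is simple with eigenfunction proportional to $K_{|\alpha|}(\cdot,\theta_T)e^{i\alpha\phi}$, the strict inequalities in \eqref{condi:sta} rule out $\lambda_\alpha=1$ for $|\alpha|\neq 1$, and \eqref{sol:deri} identifies the $|\alpha|=1$ eigenfunctions with $v^1,v^2$. You also correctly flag the one point of care (strictness of \eqref{condi:sta}), so there is nothing to add.
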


\medskip
Let us check the linearized stability of spot solutions constructed in Example \ref{ex:exi}.

\medskip
\begin{example}\label{ex:sta}
    We consider again the case that $K(x)$ is given by \eqref{kernel}.
    Then, we find
    \beaa
    K_0(\theta,\ttheta)
    &=& c_0-c_2 + (c_1-3c_3)\cos\theta\cos\ttheta \\
    &&\quad + c_2\Big(2\cos^2\theta\cos^2\ttheta + \sin^2\theta\sin^2\ttheta\Big) \\
    &&\quad + c_3\Big(4\cos^3\theta\cos^3\ttheta + 6\cos\theta\cos\ttheta\sin^2\theta\sin^2\ttheta\Big)\\
    K_1(\theta,\ttheta)
    &=& (c_1-3c_3)\sin\theta\sin\ttheta+4c_2 \cos\theta\cos\ttheta\sin\theta\sin\ttheta\\
    &&\quad +c_3(12\cos^2\theta\cos^2\ttheta\sin\theta\sin\ttheta+3\sin^3\theta\sin^3\ttheta)\\
    K_2(\theta,\ttheta) 
    &=& c_2\sin^2\theta\sin^2\ttheta +6c_3\cos\theta\cos\ttheta\sin^2\theta\sin^2\ttheta\\
    K_3(\theta,\ttheta)
    &=& c_3\sin^3\theta\sin^3\ttheta \\
    K_m(\theta,\ttheta)&=&0\quad (m\ge 4).
    \eeaa
    from the direct computation.

    As in Example \ref{ex:exi}, we consider the case that $c_1>0,c_2=c_3=0$.
    Then, we have \eqref{condi:sta} if $\theta_T$ satisfies
    \beaa
    \cos^2\theta_T < \dfrac{c_1-2c_0}{3c_1}.
    \eeaa 
    Thus, as shown in Fig. \ref{fig:sta}, we can distinguish between parameters with respect to the existence and stability of the spot solution in this case.
\end{example}

\begin{figure}[bt] 
\begin{center}
	\includegraphics[width=14cm]{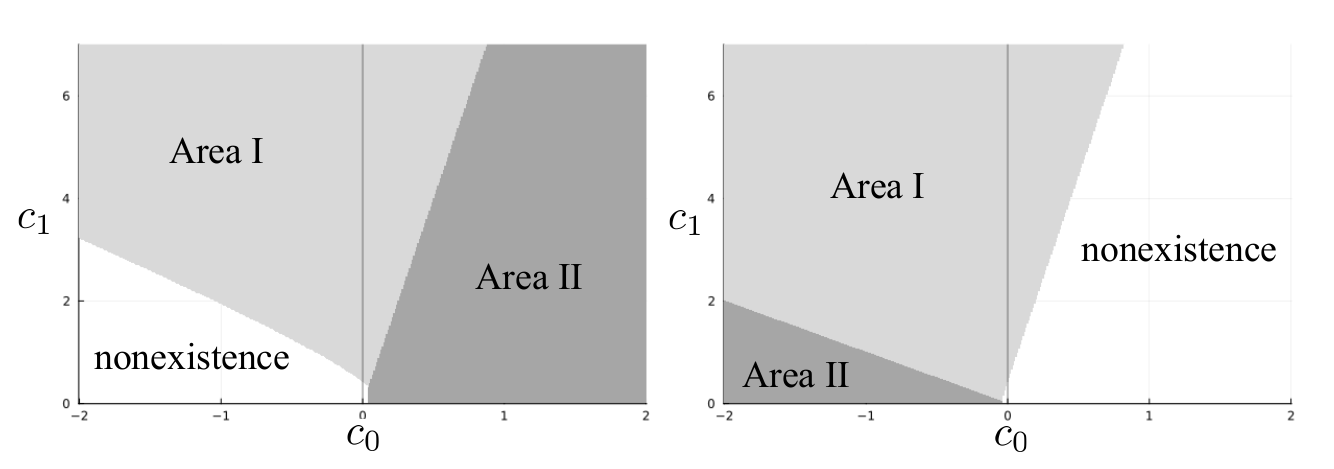}
\end{center}
\caption{\small{
Parameter areas where there exist stationary spot solutions satisfying \eqref{condi:sta} when the kernel is \eqref{kernel} with $c_1>0$ and $c_2=c_3=0$.
In Area I (light gray), there exist a spot solution that is linearized stable.
In Area II (dark gray), there are spot solutions, but they are all unstable.
Left: $u_T=0.5$. Right: $u_T=-0.5$.
}}
\label{fig:sta}
\end{figure}

\medskip
\subsection{Fredholm properties of $L_0$}

We will use perturbation methods to analyze stationary {\red spot} solutions on $S$ and their stability in Section \ref{sec:cri}. 
Then, we need to show that $L_0$ is invertible in some sense in order to apply the implicit function theorem.
Hence, we analyze the properties of $L_0$ here.

Let Assumption \ref{ass:exi} be satisfied.
In this subsection, suppose that
the spot solution \eqref{eq:spot} satisfies \eqref{condi:sta}.
Then, the formal adjoint of $L_0$ is represented by
\beaa
L^{*}_0 w(\bx) &:=& \delta(U(d_0(\bx,N))-u_T) \int_{\bS^2} K(d_0(\bx,\by)) w(\by) {\red ds_0}(\by) - w(\bx) \\
&=& \dfrac{1}{|U'(\theta_T)|}\delta(\theta-\theta_T) \int_{\bS^2} K(d_0(\bx,\by)) w(\by) {\red ds_0}(\by) - w(\bx).
\eeaa
As $L_0 = \cK - I$ is a Fredholm operator from compactness of $\cK$,
we have
\beaa
\mathrm{dim}\mathrm{Ker} L^{*}_0 =  \mathrm{dim}\mathrm{Ker} L_0 =2
\eeaa
from Corollary \ref{cor:ker}.
Moreover, by setting
\be\label{adjoint}
w^1(\bx(\theta,\phi)) := \delta(\theta-\theta_T)\cos\phi,\quad w^2(\bx(\theta,\phi)) := \delta(\theta-\theta_T) \sin\phi,
\ee
we see that $L^{*}_{0}w^{j}=0\ (j=1,2)$.

Based on Fredholm theory for compact operators (see e.g. \cite{Brezis, Kress}), the following result is obtained:
\begin{proposition}\label{prop:fred}
    For $f\in C(\bS^2)$, the equation
    \beaa
    L_0 v(\bx) + f(\bx)=0\quad (\bx\in\bS^2)
    \eeaa
    has a solution $v\in C(\bS^2)$ if and only if $f(\bx)$ satisfies
    \beaa
    \int_{\bS^2} w^1(\bx) f(\bx) ds_0(\bx) 
    =\int_{\bS^2} w^2(\bx) f(\bx) ds_0(\bx) =0.
    \eeaa
\end{proposition}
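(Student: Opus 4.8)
The plan is to read this as the Fredholm alternative for the compact perturbation of the identity $L_0=\cK-I$ on the Banach space $C(\bS^2)$, and then to match the abstract solvability condition with the concrete pairing against $w^1$ and $w^2$. Since $\cK$ is compact on $C(\bS^2)$, Riesz--Schauder theory shows that $L_0$ is Fredholm of index zero; in particular its range is closed. Writing $X=C(\bS^2)$ and identifying $X^{*}=M(\bS^2)$, the space of (complex) Radon measures on $\bS^2$, the closed-range theorem then yields
\[
\mathrm{Range}(L_0)=\{\, g\in C(\bS^2)\ \mid\ \langle \mu, g\rangle =0 \ \text{for all}\ \mu\in\mathrm{Ker}(L^{*}_0)\,\},
\]
where $L^{*}_0=\cK^{*}-I$ is the Banach-space adjoint and $\langle\cdot,\cdot\rangle$ denotes the duality pairing between measures and continuous functions. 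The whole statement is a matter of unwinding this identity.

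First I would verify that the formal adjoint displayed before the proposition genuinely represents the Banach adjoint $\cK^{*}$ in this pairing. For $w\in M(\bS^2)$ and $v\in C(\bS^2)$ I would compute $\langle w,\cK v\rangle = \int_{\bS^2}\int_{\bS^2} K(d_0(\bx,\by))\,\delta(U(d_0(\by,N))-u_T)\,v(\by)\,w(\bx)\,ds_0(\by)\,ds_0(\bx)$ and exchange the order of integration; the inner integral in $\bx$ then exhibits $\cK^{*}w$ as the measure with exactly the stated density, confirming $L^{*}_0=\cK^{*}-I$. Next I would pin down the adjoint kernel: the discussion preceding the proposition has already established, via index zero together with Corollary \ref{cor:ker}, that $\dim\mathrm{Ker}(L^{*}_0)=\dim\mathrm{Ker}(L_0)=2$, and that $L^{*}_0 w^{j}=0$ for the linearly independent measures $w^1,w^2$ of \eqref{adjoint}; hence $\mathrm{Ker}(L^{*}_0)=\mathrm{span}\{w^1,w^2\}$ exactly.

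With these pieces in place the conclusion is immediate: $L_0 v+f=0$ is solvable in $C(\bS^2)$ precisely when $-f\in\mathrm{Range}(L_0)$, i.e. when $\langle w^{j},-f\rangle=0$ for $j=1,2$, which is equivalent to $\langle w^{j},f\rangle=0$. Finally I would evaluate these pairings against \eqref{adjoint}, using
\[
\int_{\bS^2} w^1(\bx)\,f(\bx)\,ds_0(\bx)=\sin\theta_T\int^{2\pi}_{0} f(\bx(\theta_T,\phi))\cos\phi\,d\phi
\]
and the analogue with $\sin\phi$, which reproduces the two stated integral conditions verbatim.

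I expect the only genuinely delicate point to be the functional-analytic bookkeeping of the duality rather than any estimate: because $w^1,w^2$ are not functions but measures concentrated on the latitude circle $\theta=\theta_T$, one must work in the dual pair $(C(\bS^2),M(\bS^2))$ and be careful that the Dirac-delta ``formal adjoint'' really represents $\cK^{*}$ there. Once the closed range of $L_0$ and the characterization $\mathrm{Range}(L_0)={}^{\perp}\mathrm{Ker}(L^{*}_0)$ are secured -- both standard consequences of the compactness of $\cK$ (see \cite{Brezis, Kress}) -- the remainder is routine substitution.
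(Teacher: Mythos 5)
Your proposal is correct and follows essentially the same route as the paper, which simply invokes the Fredholm alternative for the compact perturbation $L_0=\cK-I$ together with $\dim\mathrm{Ker}\,L^{*}_0=\dim\mathrm{Ker}\,L_0=2$ and the explicit kernel elements $w^1,w^2$ from \eqref{adjoint}. Your extra care in working in the dual pair $(C(\bS^2),M(\bS^2))$ and checking that the formal delta-function adjoint really represents $\cK^{*}$ is exactly the bookkeeping the paper leaves implicit when citing \cite{Brezis, Kress}.
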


This proposition implies that $L_0$ is invertible in the suitable subspace of $C(\bS^2)$. 
This allows the implicit function theorem to be applied, and mathematically justifies the construction of the spot solution shown in Theorem \ref{thm:exi} and the characterization of the primary eigenvalue shown in Theorem \ref{thm:sta}.

\bigskip
\section{Stationary spot solutions localized at the north pole on spheroids}\label{sec:cri}

The aim of this section is to derive a stability criterion of stationary spot solutions localized at the north pole on $S$ defined as \eqref{surface}.
We investigate the existence and stability of stationary spot solutions by perturbation methods.
Equation \eqref{eq:main} requires computation for geometric quantities of spheroids such as geodesic distances and Jacobians.
Thus, their perturbation expansions are firstly considered when $\eps$ is sufficiently small.
Next, we construct a stationary spot solution on $S$ based on the result in Section \ref{sec:s2}.
Finally, the primary eigenvalue that determines stability is analyzed.

\medskip
\subsection{Perturbation of the geometric properties}\label{subsec:per}

Let us first consider the geodesic distance.
Geodesics on oblate spheroids have been studied for a long time in connection with geodesy.
For sufficiently small $\eps\in\bR$, $S=S(\eps)$ is geodesically complete from Hopf-Rinow theorem.
Moreover, approximate formulas of geodesic distances for oblate spheroids with sufficiently small flattening have been extensively investigated (e.g. \cite{Andoyer, Karney, Lambert, Thomas}).
Many of these studies are motivated in the context of geodesy and are aimed at deriving accurate geodetic distance formulas for an oblate spheroid with small flattening, such as the Earth.
Among them, Lambert-Andoyer formula is well known as a first-order approximation with respect to flattening \cite{Andoyer, Lambert}.

For $\bx_{\eps}(\theta,\phi), \by_{\eps}(\ttheta,\tphi)\in S$,
we give corresponding points on $\bS^2$ as $\bx(\theta,\phi), \by(\ttheta,\tphi)$, respectively.
Then, Lambert-Andoyer formula is given as
\be\label{exp:dist}
d(\bx_{\eps}(\theta,\phi),\by_{\eps}(\ttheta,\tphi)) =  d_0(\bx(\theta,\phi),\by(\ttheta,\tphi)) - \eps d_p(\bx(\theta,\phi),\by(\ttheta,\tphi)) + O(\eps^2),
\ee
where $d_p:\bS^2\times \bS^2\to [0,+\infty)$ is defined as
\bea
\hspace{1cm}d_p(\bx(\theta,\phi),\by(\ttheta,\tphi))
&:=& \dfrac{1}{2} \left[ 
    (\zeta-\sin\zeta)\left( 
    \dfrac{\cos\left(\dfrac{\theta+\ttheta}{2}\right)\cos\left(\dfrac{\theta-\ttheta}{2}\right)}{\cos\dfrac{\zeta}{2}}\right)^2 \right. \label{distP}\\
 &&   \hspace{2cm} \left. +(\zeta+\sin\zeta)\left( 
    \dfrac{\sin\left(\dfrac{\theta+\ttheta}{2}\right)\sin\left(\dfrac{\theta-\ttheta}{2}\right)}{\sin\dfrac{\zeta}{2}}\right)^2 \right] \notag  
\eea
in which $\zeta=d_0(\bx(\theta,\phi),\by(\ttheta,\tphi))$.
Note that \eqref{exp:dist} holds in the sense of uniform convergence as $\eps\to+0$.
The derivation of the formula is by reference to \cite{Andoyer}, and the higher-order expansion has been studied by \cite{Thomas}.
We proceed with our calculations based on this formula.

Next, we treat the Jacobian.
For $\bx_{\eps}(\theta,\phi)= (1 - \eps p(\eps,\bx(\theta,\phi))) \bx(\theta,\phi)\in S$, we set the corresponding point $\bx=\bx(\theta,\phi)\in \bS^2$.
For simplicity, we write
\beaa
p_{\theta}(\eps,\bx) := \dfrac{\partial}{\partial \theta}[p(\eps,\bx(\theta,\phi))],\quad
p_{\phi}(\eps,\bx) := \dfrac{\partial}{\partial \phi}[p(\eps,\bx(\theta,\phi))].
\eeaa
From the simple computations
\beaa
\dfrac{\partial \bx_{\eps}}{\partial \theta} = \dfrac{\partial \bx}{\partial \theta} 
- \eps \left( p_{\theta} \bx + p  \dfrac{\partial \bx}{\partial \theta} \right),\quad 
\dfrac{\partial \bx_{\eps}}{\partial \phi} = \dfrac{\partial \bx}{\partial \phi} 
- \eps \left( p_{\phi} \bx + p  \dfrac{\partial \bx}{\partial \phi} \right), 
\eeaa
we obtain
\beaa
 \dfrac{\partial \bx_{\eps}}{\partial \theta} \times \dfrac{\partial \bx_{\eps}}{\partial \phi}
= \dfrac{\partial \bx}{\partial \theta} \times \dfrac{\partial \bx}{\partial \phi}
- \eps \left( p_{\phi} \dfrac{\partial \bx}{\partial \theta} \times  \bx +2 p  \dfrac{\partial \bx}{\partial \theta} \times \dfrac{\partial \bx}{\partial \phi} 
+  p_{\theta} \bx\times \dfrac{\partial \bx}{\partial \phi} \right) + O(\eps^2).
\eeaa
As it follows that
\beaa
\left( \dfrac{\partial \bx}{\partial \theta} \times \dfrac{\partial \bx}{\partial \phi} \right) \cdot \left( \dfrac{\partial \bx}{\partial \theta} \times  \bx  \right) = \left( \dfrac{\partial \bx}{\partial \theta} \times \dfrac{\partial \bx}{\partial \phi} \right) \cdot \left(  \bx \times \dfrac{\partial \bx}{\partial \phi}  \right) = 0
\eeaa
from orthogonal relations of the outer product, we have
\beaa
\left| \dfrac{\partial \bx_{\eps}}{\partial \theta} \times \dfrac{\partial \bx_{\eps}}{\partial \phi} \right|
&=& (1 -2 \eps p(\eps,\bx))  \left| \dfrac{\partial \bx}{\partial \theta} \times \dfrac{\partial \bx}{\partial \phi} \right| + O(\eps^2).
\eeaa
By using the fact that
\beaa
\dfrac{\partial \bx}{\partial \theta}(\theta,\phi) \times \dfrac{\partial \bx}{\partial \phi}(\theta,\phi) = (\sin^2\theta \cos\phi, \sin^2\theta \sin\phi, \sin\theta\cos\theta),
\eeaa
we expand Jacobian as
\be\label{exp:jac}
\left| \dfrac{\partial \bx_{\eps}}{\partial \theta}(\theta,\phi) \times \dfrac{\partial \bx_{\eps}}{\partial \phi}(\theta,\phi) \right| = \sin\theta - 2\eps p(\eps,\bx(\theta,\phi)) \sin\theta + O(\eps^2).
\ee

These perturbation terms for the geodesic distance and the Jacobian will play an important role in our perturbation analysis of the spot solution and the stability analysis to be performed later.
\begin{remark}
    The Jacobian perturbation is also computed as \eqref{exp:jac} for the more general $p(\eps,\bx)$.
    However, we are not aware of the perturbation of the geodesic distance in the case of a general perturbed spherical surface.
\end{remark}

\medskip
\subsection{Existence of stationary spot solutions}

Next, let us consider the existence of stationary spot solutions localized at the north pole.
We state the existence result for stationary solutions to Equation \eqref{eq:main}.

Before stating the result, we introduce some notation.
To simplify the analysis, we consider the dependence of various functions on $\theta, \ttheta, \phi, \tphi$.
Let $\tilde{p}(\eps,\theta)=p(\eps,\bx_{\eps}(\theta,\phi))$, where $p(\eps,\bx)$ is defined in \eqref{per}.
Next, the functions $d_0$ and $d_p$, defined as \eqref{dist0} and \eqref{distP}, respectively, are expressed as 
\beaa
d_0(\bx(\theta,\phi),\by(\ttheta,\tphi))=  \tilde{d}_0(\theta,\ttheta,\phi-\tphi)\quad
d_p(\bx(\theta,\phi),\by(\ttheta,\tphi))=  \tilde{d}_p(\theta,\ttheta,\phi-\tphi)
\eeaa
using the appropriate functions $\tilde{d}_0$ and $\tilde{d}_p$.

We shall write the right hand side of Equation \eqref{eq:main} as
\be\label{RHS}
\cL(\eps,\Phi)(\bx_{\eps}) := \int_{S} K(d(\bx_{\eps},\by_{\eps}))H(\Phi(\by_{\eps})-u_T) ds(\by_{\eps}) - \Phi(\bx_{\eps})
\ee
for $\Phi\in C(S)$.
For a stationary spot solution \eqref{eq:spot} to \eqref{eq:main},
we define functions as
\beaa
f(\bx) &:=& \int_{\bS^2}H(U(d_0(\by,N))-u_T)F(\bx,\by)ds_0(\by), \\
F(\bx,\by)&:=&K'(d_0(\bx,\by))d_p(\bx,\by)+2p(\eps,\by)K(d_0(\bx,\by)).
\eeaa
Expressing these in terms of latitude and longitude yields
\beaa
F(\bx(\theta,\phi),\by(\ttheta,\tphi))&=&K'(\tilde{d}_0(\theta,\ttheta,\phi-\tphi))\tilde{d}_p(\theta,\ttheta,\phi-\tphi)+2\tilde{p}(\eps,\ttheta)K(\tilde{d}_0(\theta,\ttheta,\phi-\tphi)) \\
&=:&\tF(\theta,\ttheta,\phi-\tphi),
\eeaa
and 
\beaa
f(\bx(\theta,\phi)) = \int^{\theta_T}_{0} \int^{2\pi}_{0}\tF(\theta,\ttheta,\tphi)\sin\ttheta d\tphi d\ttheta 
=: \tf(\theta).
\eeaa
Here, we note that $\tF(\theta,\ttheta,\tphi)$ is a $2\pi$-periodic function with respect to $\tphi$.

Then, we obtain
\begin{theorem}\label{thm:exi}
    Let Assumption \ref{ass:exi} be satisfied.
    Suppose that the spot solution on $\bS^2$ satisfies Condition \eqref{condi:sta}, which ensures its linearized stability on $\bS^2$.
    Then there are positive constants $\eps_0>0$ and $C_0>0$ such that for any $\eps\in(0,\eps_0)$, 
    there exists a stationary solution $U_{\eps}(\bx_{\eps}(\theta,\phi))= \tilde{U}_{\eps}(\theta)$ to Equation \eqref{eq:main} satisfying
    \beaa
    \sup_{(\theta,\phi)\in [0,\pi]\times[0,2\pi]} |U_{\eps}(\bx_{\eps}(\theta,\phi)) - U(\theta)| \le C_0 \eps.
    \eeaa
    Moreover, we have
    \beaa
    U_{\eps}(\bx_{\eps}(\theta,\phi)) = U(\theta) + \eps V(\theta) + o(\eps),
    \eeaa
    where 
    \bea
    V(\theta) &:=& \frac{2 K_0(\theta,\theta_T) }{\lambda_0 K_1(\theta_T,\theta_T)} \tf(\theta_T)-\tf(\theta), \label{exp:sol} \\
    \lambda_0 &:=& \dfrac{2 K_0(\theta_T,\theta_T)}{K_1(\theta_T,\theta_T)} -1. \notag
    \eea
\end{theorem}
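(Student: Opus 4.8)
\emph{Approach.} The plan is to construct the solution inside the rotationally invariant class and then apply the implicit function theorem (IFT) to the map $\cL(\eps,\cdot)$ of \eqref{RHS}. Since $S$, its area element and its geodesic distance are all invariant under the rotations $\phi\mapsto\phi+\phi_0$, the operator $\cL(\eps,\cdot)$ maps the axisymmetric subspace $C_{\mathrm{axi}}(\bS^2):=\{\Phi(\bx(\theta,\phi))=\tilde\Phi(\theta)\}$ into itself, so I would look for a zero of $\cL(\eps,\cdot)$ in $C_{\mathrm{axi}}(\bS^2)$ of the form $\Phi=U+\psi$. By the uniform first-order expansions \eqref{exp:dist} and \eqref{exp:jac} of the geodesic distance and the Jacobian, $\cL(\eps,\cdot)$ is a smooth $O(\eps)$ perturbation of $\cL(0,\cdot)$; differentiating the integral term in $\eps$ with the super-level set $\{U>u_T\}=\{\theta<\theta_T\}$ held fixed gives $\partial_\eps\cL(0,U)=-f$, with $f(\bx(\theta,\phi))=\tf(\theta)$ the function assembled from $F=K'd_p+2pK$.

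\emph{Main obstacle.} The genuine difficulty is that the Heaviside nonlinearity makes $\cL$ non-differentiable in $\Phi$ on $C(\bS^2)$, so the IFT cannot be applied naively. I would resolve this by restricting to profiles crossing $u_T$ exactly once and transversally near $\theta_T$, a property that persists in a neighborhood of $U$ precisely because Assumption \ref{ass:exi} supplies $U'(\theta_T)<0$. On this class the crossing location can be treated as an auxiliary smooth variable (equivalently, one works with $C^1$ profiles), the integral term reduces to an integral over the fixed-type region $\{\tilde\theta<a\}$ that is jointly smooth in $(\eps,\Phi)$, and the coarea formula—equivalently, the $\delta$-function computation already used to define $\cK$—identifies the $\Phi$-derivative at $U$ with $\cK$. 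Thus $D_\Phi\cL(0,U)=\cK-I=L_0$, the operator analyzed in Section \ref{sec:s2}, and this differentiability step is where essentially all the care is needed.

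\emph{Invertibility and IFT.} On the axisymmetric subspace $C_{\mathrm{axi}}(\bS^2)$, spanned by $\{Y^0_\beta:\beta\ge0\}$, the operator $\cK$ is rank one, sending $g(\theta)$ to $\frac{2g(\theta_T)}{K_1(\theta_T,\theta_T)}K_0(\theta,\theta_T)$, so its only nonzero eigenvalue is $2K_0(\theta_T,\theta_T)/K_1(\theta_T,\theta_T)$ and $L_0$ restricted there has spectrum $\{-1,\lambda_0\}$. Hence $L_0$ is invertible on the axisymmetric subspace as soon as $\lambda_0\neq0$, which \eqref{condi:sta} guarantees. (Equivalently, any axisymmetric $f$ is automatically orthogonal to the $\phi$-odd functions $w^1,w^2$ of \eqref{adjoint}, so Proposition \ref{prop:fred} already yields a solution, which may be taken axisymmetric.) The IFT then produces, for $0<\eps<\eps_0$, a unique axisymmetric $\psi_\eps$ with $\|\psi_\eps\|_{C(\bS^2)}\le C_0\eps$ and $\psi_\eps=\eps V+o(\eps)$, where $V$ solves $L_0V=f$; setting $U_\eps=U+\psi_\eps$ gives the stated bound. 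A continuity argument shows $U_\eps$ still crosses $u_T$ once transversally, so $H(U_\eps-u_T)$ is the indicator of $\{\theta<\theta_{T,\eps}\}$ and $U_\eps$ genuinely solves \eqref{eq:main} rather than merely the reduced problem.

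\emph{The expansion.} Finally I would make $V$ explicit from $L_0V=f$, i.e.\ $V=\cK V-f$. Evaluating at $\theta=\theta_T$ and using the rank-one formula gives $V(\theta_T)\bigl(1-\tfrac{2K_0(\theta_T,\theta_T)}{K_1(\theta_T,\theta_T)}\bigr)=-\tf(\theta_T)$, that is $V(\theta_T)=\tf(\theta_T)/\lambda_0$; substituting back yields $V(\theta)=\frac{2K_0(\theta,\theta_T)}{\lambda_0 K_1(\theta_T,\theta_T)}\tf(\theta_T)-\tf(\theta)$, which is exactly \eqref{exp:sol}.
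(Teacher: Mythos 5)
Your proposal is correct and follows essentially the same route as the paper: restrict to the axisymmetric subspace $X=\{\Phi(\bx(\theta,\phi))=\tilde\Phi(\theta)\}$, apply the implicit function theorem to $\cL(\eps,\cdot)$ with $\partial_\eps\cL(0,U)=-\tf$ and $D_\Phi\cL(0,U)=L_0$ via \eqref{exp:dist} and \eqref{exp:jac}, invert $L_0$ on $X$ (the paper invokes Proposition \ref{prop:fred}, while your rank-one computation of $\cK$ on axisymmetric functions makes the same invertibility explicit), and solve $L_0V=f$ to obtain \eqref{exp:sol}. Your added care about differentiability of the Heaviside term near a transversal crossing, and the closing check that $U_\eps$ still crosses $u_T$ once, are points the paper's proof asserts implicitly, so they strengthen rather than alter the argument.
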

\begin{proof}
    For $\Phi(\bx_{\eps}(\theta,\phi))= U(\theta)+W(\theta)$,
    let us consider the existence of $W$ such that
    \beaa
    G(\eps,W) := \cL(\eps,\Phi)=0
    \eeaa
    for sufficiently small $\eps\in\bR$.
    Set $X:= \{\Phi\in C(\bS^2) \mid \Phi(\bx(\theta,\phi)) = \tilde{\Phi}(\theta)\ (\tilde{\Phi}\in C[0,\pi]) \}$.
    From the symmetry of spheroids, $G:[0,\eps^{*})\times X\to X$ is well-defined and of class $C^1$ for sufficiently small $\eps^{*}>0$.
    Then, we have $G(0,0)=0$.
    Moreover, the Fr\'{e}chet derivatives are given as
    \beaa
    \dfrac{\partial}{\partial \eps}G(0,0) = {\red -\tf},\quad\dfrac{\partial}{\partial W}G(0,0) = L_0.
    \eeaa
    Here, we use \eqref{exp:dist} and \eqref{exp:jac}.
    Since $L_0 :X\to X$ is invertible from Proposition \ref{prop:fred},
    we apply the implicit function theorem.
    Thus, there is $\eps_0>0$ such that 
    there exists $W: [0,\eps_0)\to X$ satisfying
    \beaa
    W(0)=0,\quad G(\eps,W(\eps)) = 0.
    \eeaa
    Moreover, $W(\eps)$ is a $C^1$-function with respect to $\eps$, and 
    $W_1:=\dfrac{\partial W}{\partial \eps}(0)$ is a unique solution of
    \beaa
    L_0 W_1 - f(\bx)=0.
    \eeaa
    From the direct calculation with \eqref{exp} and \eqref{sol:deri}, we find that $V(\theta)$ is a unique solution to this problem.
    Note that $\lambda_0 \neq 0$, since we assume that $U$ satisfies \eqref{condi:sta}.
    Therefore, we obtain the desired assertion.
\end{proof}

\medskip
\subsection{Criterion for the stability on spheroids}

In this subsection, we discuss the stability of the spot solution constructed in Theorem \ref{thm:exi}.
Almost all eigenvalues are negative 
if we consider stable spot solutions on $\bS^2$. 
Thus, only perturbations with 0 eigenvalue should be considered.

Note that Fr\'{e}chet derivative $\cL'(\eps,\Phi)$ of $\cL(\eps)$ at $\Phi\in C(S)$ is represented as
\beaa
\cL'(\eps,\Phi)\Psi(\bx_{\eps}) :=  \int_{S} K(d(\bx_{\eps},\by_{\eps}))\delta (\Phi(\by_{\eps})-u_T) \Psi(\by_{\eps}) ds(\by_{\eps}) - \Psi(\bx_{\eps})
\eeaa
for $\Psi\in C(S)$, where $\cL$ is defined as \eqref{RHS}.
Then, we deduce
\begin{theorem}\label{thm:sta}
    Let all assumptions of Theorem \ref{thm:exi} be satisfied.
    Let $U_{\eps}(\bx_{\eps})$ be a stationary solution to Equation \eqref{eq:main} constructed in Theorem \ref{thm:exi}.
    Then, for sufficiently small $\eps>0$,
    there is an eigenvalue $\mu=\mu(\eps)$ of $\cL'(\eps,U_{\eps})$ such that
    \beaa
    \mu = \eps \mu_1 + o(\eps)
    \eeaa
    holds, where
    \be\label{eig}
    \mu_1 =  \frac{\pi}{|U'(\theta_T)|^2}\partial_\ttheta\left(K_1(\theta_T,\ttheta)V(\ttheta)\sin\ttheta\right)\Bigg|_{\ttheta=\theta_T} -\dfrac{\sin\theta_T}{|U'(\theta_T)|}\int_0^{2\pi}\tF(\theta_T,\theta_T,\tphi)\cos\tphi\ d\tphi.
    \ee
\end{theorem}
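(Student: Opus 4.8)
The plan is to treat $\cL'(\eps,U_\eps)$ as a perturbation of $L_0=\cK-I$ and to track the eigenvalue(s) that bifurcate from the double eigenvalue $0$ of $L_0$. Under Assumption \ref{ass:exi} and Condition \eqref{condi:sta}, Corollary \ref{cor:ker} gives $\mathrm{Ker}\,L_0=\mathrm{span}\{v^1,v^2\}$, while Proposition \ref{prop:sta} places the remainder of $\sigma(L_0)$ strictly in $\{\mathrm{Re}\,\lambda<0\}$, bounded away from $0$. Since each $\cL'(\eps,U_\eps)=\cK_\eps-I$ is a compact perturbation of $-I$ (the compactness argument for $\cK$ applies verbatim), its spectrum near $0$ for small $\eps$ consists of a group of eigenvalues of total multiplicity two. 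I would capture these by a Lyapunov--Schmidt reduction onto $\mathrm{Ker}\,L_0$ along $\mathrm{Range}\,L_0$: writing $\cL'(\eps,U_\eps)=L_0+\eps L_1+o(\eps)$ and expanding the eigenpair as $\mu=\eps\mu_1+o(\eps)$, $\Psi=\Psi_0+\eps\Psi_1+o(\eps)$ with $\Psi_0=a\,v^1+b\,v^2$, the order-$\eps$ equation $L_0\Psi_1=\mu_1\Psi_0-L_1\Psi_0$ is solvable, by Proposition \ref{prop:fred}, if and only if $\langle w^j,\ \mu_1\Psi_0-L_1\Psi_0\rangle=0$ for $j=1,2$, where $w^1,w^2$ are the adjoint null functions \eqref{adjoint}. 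This is the reduced generalized $2\times2$ eigenvalue problem $M\binom{a}{b}=\mu_1 N\binom{a}{b}$ with $M_{jk}=\langle w^j,L_1 v^k\rangle$ and $N_{jk}=\langle w^j,v^k\rangle$.

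Next I would use the axial symmetry to collapse this $2\times2$ problem to the scalar \eqref{eig}. Since $S$ and $U_\eps$ are invariant under the rotations $\phi\mapsto\phi+\phi_0$ about the $z$-axis and under the reflection $\phi\mapsto-\phi$, the operators $\cL'(\eps,U_\eps)$, $L_0$, and hence $L_1$ all commute with this $O(2)$-action, under which $\mathrm{span}\{v^1,v^2\}$ and $\mathrm{span}\{w^1,w^2\}$ each carry the (real-irreducible) two-dimensional representation. By Schur's lemma both $M$ and $N$ are scalar multiples of the identity, so it suffices to compute
\[
\mu_1=\frac{\langle w^1,L_1 v^1\rangle}{\langle w^1,v^1\rangle},\qquad v^1=U'(\theta)\cos\phi,\quad w^1=\delta(\theta-\theta_T)\cos\phi.
\]
The denominator is elementary: $\langle w^1,v^1\rangle=\int_{\bS^2}w^1 v^1\,ds_0=\pi\sin\theta_T\,U'(\theta_T)=-\pi\sin\theta_T\,|U'(\theta_T)|$, where the last equality uses \eqref{sol:deri}. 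The factors $|U'(\theta_T)|^{-2}$ and $|U'(\theta_T)|^{-1}$ in \eqref{eig} then emerge from this normalization together with the $|U'(\theta_T)|$ carried by $L_1$.

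The core computation, and the main obstacle, is the explicit first-order operator $L_1$ obtained from
\[
\cL'(\eps,U_\eps)\Psi(\bx_\eps)=\int_S K(d(\bx_\eps,\by_\eps))\,\delta(U_\eps(\by_\eps)-u_T)\,\Psi(\by_\eps)\,ds(\by_\eps)-\Psi(\bx_\eps).
\]
Pulling back to $\bS^2$, the two direct geometric perturbations, namely $K(d)$ through \eqref{exp:dist} and the area element through \eqref{exp:jac}, assemble precisely into $\tF(\theta,\ttheta,\phi-\tphi)=K'(\tilde d_0)\tilde d_p+2\tilde p\,K(\tilde d_0)$; after pairing against $w^1$ and integrating in $\tphi$ this produces the second term $-\tfrac{\sin\theta_T}{|U'(\theta_T)|}\int_0^{2\pi}\tF(\theta_T,\theta_T,\tphi)\cos\tphi\,d\tphi$ of \eqref{eig}. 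The delicate part is the factor $\delta(U_\eps(\ttheta)-u_T)$: writing it as $\delta(\ttheta-\theta_T^\eps)/|U_\eps'(\theta_T^\eps)|$ with the threshold shift $\theta_T^\eps=\theta_T-\eps V(\theta_T)/U'(\theta_T)+o(\eps)$ forced by $U_\eps(\theta_T^\eps)=u_T$, and inserting $U_\eps=U+\eps V+o(\eps)$ from Theorem \ref{thm:exi}, the $O(\eps)$ contribution carries a $\delta'(\ttheta-\theta_T)$. Tested against the smooth $\ttheta$-dependent integrand, this $\delta'$ yields $\partial_\ttheta\big(K_1(\theta_T,\ttheta)V(\ttheta)\sin\ttheta\big)\big|_{\ttheta=\theta_T}$, the first term of \eqref{eig}; the mode $K_1$ is selected because pairing $v^1,w^1\sim\cos\phi$ against the expansion \eqref{exp} isolates $m=1$ in the $\tphi$-integral. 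Collecting both contributions and dividing by $\langle w^1,v^1\rangle$ gives \eqref{eig}. The remaining technical care is to justify the $\delta'$-expansion rigorously (for instance by regularizing the Heaviside nonlinearity, or by differentiating the smooth interface identity $U_\eps(\theta_T^\eps)=u_T$) and to verify that the Lyapunov--Schmidt remainders are uniform, so that the formal reduction genuinely identifies an eigenvalue $\mu(\eps)$ of $\cL'(\eps,U_\eps)$.
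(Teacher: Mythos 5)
Your proposal is correct and takes essentially the same route as the paper: the paper likewise continues the zero eigenvalue by an implicit-function-theorem/Lyapunov--Schmidt argument and extracts $\mu_1$ from the Fredholm solvability condition $\mu_1\langle v^1,w^1\rangle_{L^2(\bS^2)}=\langle\cN,w^1\rangle_{L^2(\bS^2)}$ with $\cN=\partial_\eps[\cM_\eps v^1]|_{\eps=0}$, assembling the same two contributions (threshold shift through $V$, geometric perturbation through $\tF$); the only presentational difference is that the paper imposes the axial symmetry up front by restricting to the invariant subspace $Y=\{\Psi(\bx(\theta,\phi))=\tilde\Psi(\theta)\cos\phi\}$, which makes the perturbed eigenvalue simple and replaces your $2\times 2$ reduction with Schur's lemma. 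One bookkeeping caution: the $O(\eps)$ expansion of $\delta(U_\eps-u_T)=\delta(\ttheta-\theta_T^{\eps})/|U_\eps'(\theta_T^{\eps})|$ produces a $\delta$-term from the prefactor and the $\sin\ttheta$ evaluation in addition to your $\delta'$-term, and only after combining these with the factor $U'(\ttheta)$ carried by $v^1$ (so that $U'(\ttheta)/|U'(\ttheta)|=-1$ cancels) does everything collapse to the clean total derivative $\partial_\ttheta\big(K_1(\theta_T,\ttheta)V(\ttheta)\sin\ttheta\big)\big|_{\ttheta=\theta_T}$ appearing in \eqref{eig}, exactly as in the paper's computation of $\cN$.
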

\begin{proof}
For simplicity, we identify $\Psi\in C(S)$ and $\Psi\in C(\bS^2)$ by recapturing them at the same latitude and longitude in this proof. 
    Since the eigenspace corresponding to the 0 eigenvalue is two-dimensional, 
    one should focus on the change in each eigenvalue. 
    The method is essentially the same, and thus only one of the cases is presented.

    Let us rewrite $d(\bx(\theta,\phi),\by(\ttheta,\tphi))$ as $\tilde{d}(\theta,\ttheta,\phi-\tphi)$.
    Also, denote $\cL'(\eps,U_{\eps})$ by $\cL'_{\eps}$.
    Note that $\tilde{d}$ is a $2\pi$-periodic even function with respect to $\phi-\tphi$.
    Let $Y := \{ \Psi\in C(\bS^2) \mid \Psi(\bx(\theta,\phi)) = \tilde{\Psi} (\theta)\cos\phi\}$.
    Then, $Y$ is a closed subset of $C(\bS^2)$, and $\cL'_{\eps} Y \subset Y$ holds for any sufficiently small $\eps\in\bR$.
    Moreover, $\mathrm{Im}\ L_0$ is a closed subset of $Y$ from the Fredholm property of $L_0: Y\to Y$.

    For $\Psi\in Y$,
    we introduce the operators into $Y$ as
    \beaa
    \cM_{\eps} \Psi := \cL'(\eps,U_{\eps})\Psi - L_0 \Psi, \quad
    P\Psi := \Psi - \dfrac{\lng \Psi, w^{1} \rng_{L^2(\bS^2)}}{\lng v^{1}, w^{1} \rng_{L^2(\bS^2)}} v^{1},
    \eeaa
    where $v^{1}(\bx)$ and $w^{1}(\bx)$ are defined in \eqref{trans} and \eqref{adjoint}, respectively.
    For $(\mu,\Psi)\in \bR\times Y$,
    let us define $H:\bR^2\times Y \to \mathrm{Im}\ L_0\times\bR^2$ as
    \beaa
    H(\eps,\mu,\Psi) &:=
    & \Bigg( L_0 \Psi - \mu P (v^1 + \Psi)+ P\cM_{\eps} (v^{1}+\Psi), \\
    && \quad\quad \lng \mu (v^1 + \Psi) - \cM_{\eps} (v^{1}+\Psi), w^{1} \rng_{L^2(\bS^2)}, \ 
     \lng v^1, \Psi \rng_{L^2(\bS^2)} \Bigg).
    \eeaa
    For sufficiently small $\eps^{*}>0$, it is easy to see that $H:[0,\eps^{*})\times\bR\times Y \to \mathrm{Im}\ L_0\times \bR^2$ is of class $C^1$.
    We have $H(0,0,0)=(0,0,0)$ and
    \beaa
    DH (\eta,\Theta) &:=&
     \left[ \dfrac{\partial H}{\partial (\mu,\Psi)}(0,0,0)\right] (\eta,\Theta) \\
     &=&
    \Bigg(
        L_0 \Theta - \eta P v^1,
        \eta \lng v^1, w^1 \rng_{L^2(\bS^2)},
        \lng v^1, \Theta \rng_{L^2(\bS^2)} \Bigg)
    \eeaa
    for $(\eta,\Theta)\in \bR\times Y$.
    Notice that $\lng \Psi, w^2 \rng_{L^2(\bS^2)}=0$ holds for any $\Psi\in Y$, 
    we see that $DH:\bR\times Y \to \mathrm{Im}\ L_0\times \bR^2$ is invertible.
    From the implicit function theorem, for sufficiently small $\eps\in\bR$, 
    there are $\mu=\mu(\eps)\in\bR$ and $\Psi(\eps)\in Y$ such that
    $H(\eps,\mu(\eps),\Psi(\eps))=0$.
    Moreover, they satisfy
    \beaa
    \cL'(\eps,U_{\eps})(v^1+\Psi(\eps)) = \mu(\eps) (v^1+\Psi(\eps)).
    \eeaa
    Therefore, $\mu(\eps)$ is an eigenvalue of $\cL'(\eps,U_{\eps})$.

    Next, we consider the perturbation expansion for $\mu(\eps)$.
    Since $\mu(\eps)$ and $\Psi(\eps)$ are of class $C^1$ in the neighborhood of $\eps=0$, 
    they are expandable as 
    \beaa
    \mu(\eps)=\eps \mu_1 +o(\eps),\quad \Psi(\eps) = \eps \Psi_1 + o(\eps).
    \eeaa
    for some $\mu_1\in\bR$ and $\Psi_1\in Y$.
    Let us determine $\mu_1$.
    Focusing on the second component of $H(\eps,\mu(\eps),\Psi(\eps))=0$, 
    the expansion yields 
    \beaa
    \mu_1 \lng v^1, w^{1} \rng_{L^2(\bS^2)} -
    \left< \cN, w^{1} \right>_{L^2(\bS^2)} =0
    \eeaa
    as the coefficient of $O(\eps)$, where
    \beaa
    \cN(\bx):=\dfrac{\partial}{\partial \eps}[\cM_{\eps} v^{1}] \Bigg|_{\eps=0}.
    \eeaa
    Then, by using \eqref{exp:dist}, \eqref{exp:jac} and \eqref{exp:sol}, it is calculated as 
    \beaa
    \cN(\bx(\theta,\phi)) &=& -\frac{1}{|U'(\theta_T)|}\int_0^{2\pi}\partial_\ttheta\left(K(d_0(\bx(\theta,\phi),\by(\ttheta,\tphi)))V(\ttheta)\sin\ttheta\right)|_{\ttheta=\theta_T}\cos\tphi d\tphi \\
    && \quad -\frac{U'(\theta_T)\sin\theta_T}{|U'(\theta_T)|}\int_0^{2\pi}\tF(\theta,\theta_T,\tphi)\cos(\phi-\tphi) d\tphi \\
    &=& -\frac{\pi}{|U'(\theta_T)|}\partial_\ttheta\left(K_1(\theta,\ttheta)V(\ttheta)\sin\ttheta\right)|_{\ttheta=\theta_T}\cos\phi \\
    && \quad -\dfrac{U'(\theta_T)\sin\theta_T}{|U'(\theta_T)|} \left( \int_0^{2\pi}\tF(\theta,\theta_T,\tphi)\cos\tphi d\tphi \right) \cos\phi\\
    \eeaa
    by using \eqref{exp}.
    Moreover, we have
    \beaa
    \left< \cN, w^{1} \right>_{L^2(\bS^2)}
    &=& \sin\theta_T \int^{2\pi}_{0} \cN(\bx(\theta_T,\phi)) \cos \phi d\phi \\
    &=& -\dfrac{\pi^2 \sin\theta_T}{|U'(\theta_T)|}\partial_\ttheta\left(K_1(\theta_T,\ttheta)V(\ttheta)\sin\ttheta\right)|_{\ttheta=\theta_T} \\
    &&\quad -\dfrac{\pi U'(\theta_T)\sin^2\theta_T}{|U'(\theta_T)|} \int_0^{2\pi}\tF(\theta_T,\theta_T,\tphi)\cos\tphi d\tphi \\
    \eeaa
    Note that $\lng v^1, w^{1} \rng_{L^2(\bS^2)}=\pi U'(\theta_T) \sin\theta_T$, then we get \eqref{eig}.

    The above discussion led to the desired assertion. 
    Note that the same eigenvalue is derived even if we proceed with the argument with $v^{2}(\bx)$. 
    This completes the proof.
\end{proof}

\medskip
\begin{remark}
    The integral part of $\cL'(\eps,U_{\eps})$ is a compact operator on $C(S)$, 
    and hence all elements of the spectrum are also eigenvalues except for $-1$, as in $L_0$.
    Furthermore, all eigenvalues of $\cL'(\eps,U_{\eps})$ are represented by a perturbation expansion centered on the eigenvalue of $L_0$, 
    following almost the same argument as in the proof of Proposition \ref{prop:sta}.
\end{remark}

\medskip
Under the assumption of the existence of a stable spot solution on the spherical surface, we construct a stationary spot solution localized at the north pole on oblate spheroids with sufficiently small flattening and derive a stability criterion
\beaa
\mu_1 <0
\eeaa
for it.
For sufficiently small $\eps>0$, the sign of $\mu_1$ determines the stability of this spot solution.
Although the exact form of $\mu_1$ has been calculated, it is not easy to investigate the sign of $\mu_1$ analytically, and this is a future problem.
In the next section, we will examine the value of $\mu_1$ numerically and investigate the global behavior of the spot solution by numerical simulation.

\bigskip
\section{Numerical simulations}\label{sec:num}

Results of numerical simulations are presented. 
The scheme and computation method for solving Equation \eqref{eq:main} numerically are described in Appendix \ref{app:num}.

Hereafter, we set $\eps=0.01$ and use \eqref{kernel} as $K(x)$.
The stability of the spot solution localized at the north pole is shown in Fig. \ref{fig:eig}.
In the figure, for each kernel $K(x)$,
we have determined $u_T$ numerically so that 
\beaa
u_T= \int^{\theta_T}_{0} \left( \int^{2\pi}_{0} K(\cos^{-1} (\sin\theta_T\sin\ttheta\cos\tphi+\cos\theta_T\cos\ttheta)) 
d\tphi \right) \sin \ttheta d\ttheta\ (= U(\theta_T;\theta_T))
\eeaa
for a given $\theta_T\in (0,\pi)$, 
to make it easier to construct the spot solution on $\bS^2$.
We also have confirmed that $U(\theta;\theta_T)$ defined in \eqref{sol} satisfies \eqref{eq:spot} numerically, 
and then checked whether the stability criterion \eqref{condi:sta} on the sphere is satisfied.
In the simulations, the initial conditions are given as exact solutions $U$ in Example \ref{ex:exi}.

In the left panel in Fig. \ref{fig:eig}, 
there is an interval where a stable spot solution exists on the sphere, 
but $\mu_1$ is positive, indicating that the spot solution is unstable on the spheroid. 
On the other hand, the right panel in Fig. \ref{fig:eig} shows that 
there are two intervals with stable spot solutions on the sphere, and that the stability of the corresponding spot solutions on the spheroid is different for each interval.

\begin{figure}[bt] 
\begin{center}
	\includegraphics[width=14cm]{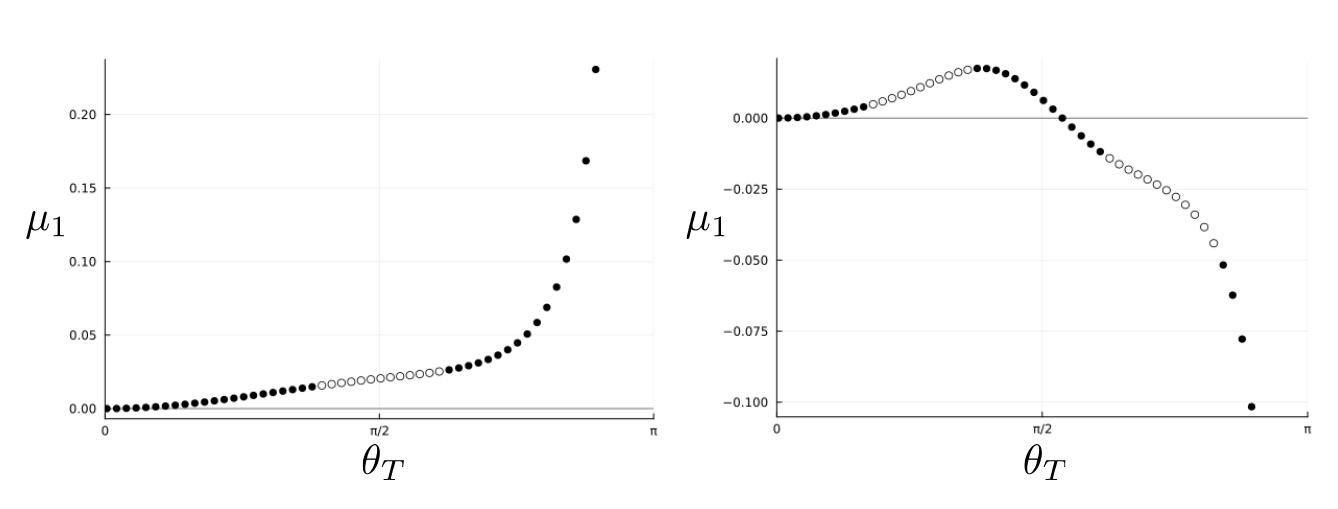}
\end{center}
\caption{\small{
The graph of $\mu_1$. 
The white and black points correspond to the cases when the numerically obtained stationary spot solution $U(\theta;\theta_T)$ on $\bS^2$ satisfies and does not satisfy Condition \eqref{condi:sta}, respectively. 
That is, the white points correspond to the case of a stable spot solution on the sphere.
Left: $c_0=0.6$, $c_1=2.0$, $c_2=c_3=0.0$.
Right: $c_0=0.14$, $c_1=0.9$, $c_2=1.2$, $c_3=0.45$.
The graphs of connection kernels with these parameters are shown in Fig. \ref{fig:kernel}.}}
\label{fig:eig}
\end{figure}

Numerical simulations of Equation \eqref{eq:main} are shown in Fig. \ref{fig:app1}.
Under the same connection kernel as the right panel in Fig. \ref{fig:eig}, 
numerical simulations are conducted to approximate $u_T$ after determining $\theta_T$.
Moreover, the simulations correspond to the case when there are stable spot solutions on the sphere and $\mu_1$ is positive and negative, respectively.
When $\mu_1$ is positive, the spot solution moves away from the north pole and is localized near the equator.
On the other hand, when $\mu_1$ is negative, the spot solution moves toward the north pole and is localized there.
Our analysis has not been able to construct spot solutions on the spheroid that are localized at other points than the poles, 
but numerical simulations suggest the existence of spot solutions localized at the equator.

Finally, Fig. \ref{fig:app2} shows the different dynamics by different connection kernels on the same spheroid and for a fixed $u_T$.
In this simulation, after determining model parameters such as $K$ and $u_T$,
we numerically obtained $\theta_T\in(0,\pi)$ that can construct a stable stationary spot solution.
This example suggests that even if the curved surface is determined, 
the dynamics will vary depending on the shape of the connection kernel.

\begin{figure}[bt] 
\begin{center}
	\includegraphics[width=15cm]{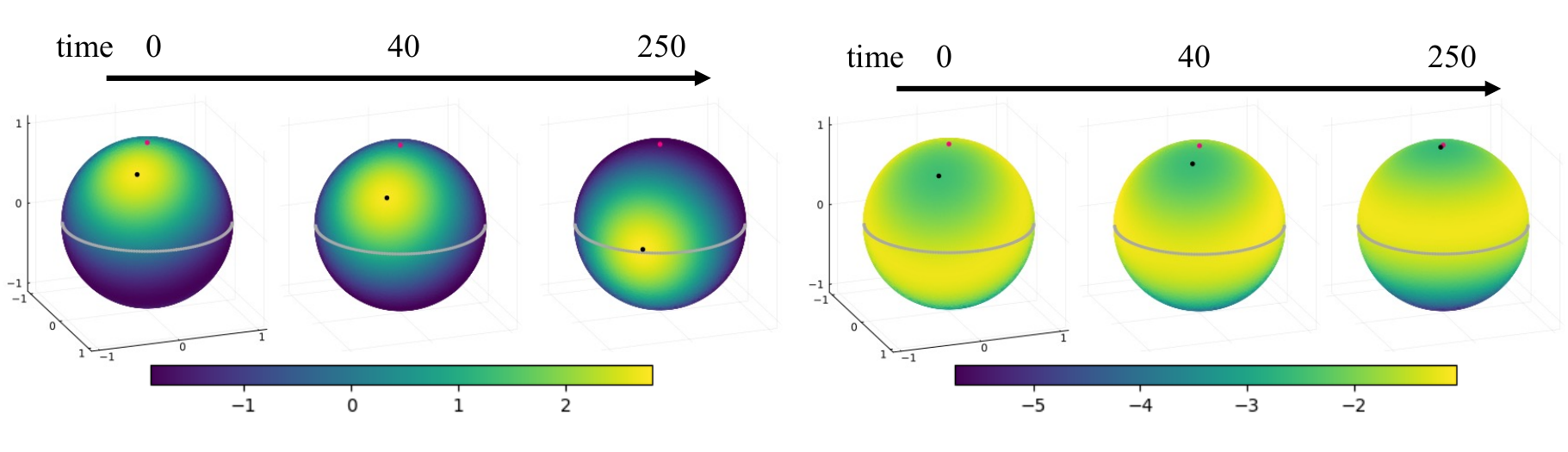}
\end{center}
\caption{\small{
Numerical simulations of Equation \eqref{eq:main} with $\eps=0.01$.
The red dot and the gray line represent the north pole and the equator, respectively.
The black dot corresponds to the center of the spot solution obtained numerically.
Parameters of the kernel function \eqref{kernel} are given as $c_0=0.14$, $c_1=0.9$, $c_2=1.2$, $c_3=0.45$ in both panels. 
Left: $\theta_T=1.0$, which corresponds to $\mu_1>0$ and $u_T\simeq -0.1898$.
Right: $\theta_T=2.0$, which corresponds to $\mu_1<0$ and $u_T\simeq -2.6068$.
}
}
\label{fig:app1}
\end{figure}

\begin{figure}[bt] 
\begin{center}
	\includegraphics[width=14cm]{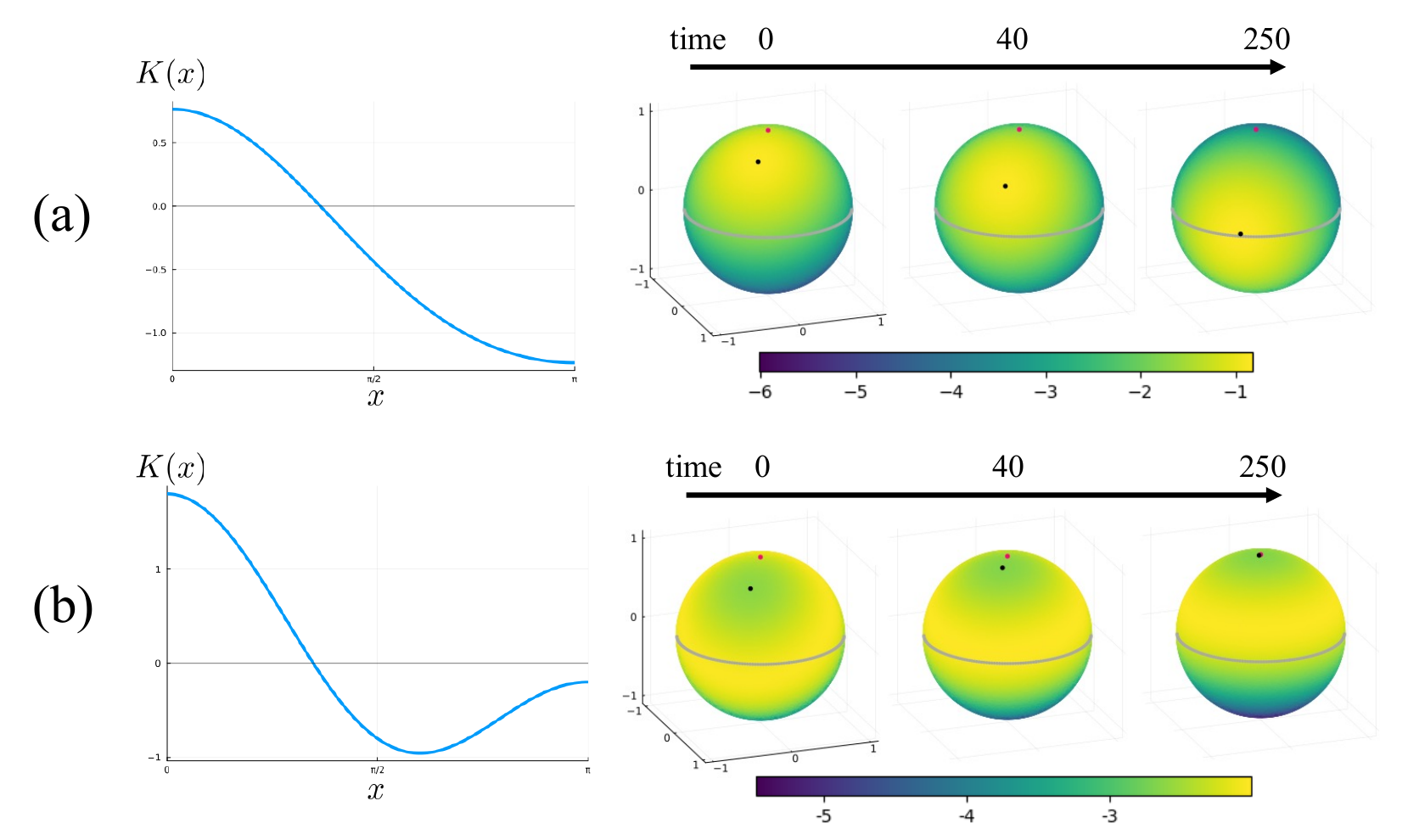}
\end{center}
\caption{\small{
Numerical simulations of Equation \eqref{eq:main} with $\eps=0.01$ and $u_T=-4.32$.
The left and right panels show the graphs of the kernel and the numerical results, respectively.
(a) $c_0=-0.335$, $c_1=1.0$, $c_2=0.1$, $c_3=0.0$, which corresponds to $\theta_T\simeq 1.9989$ and $\mu_1>0$.
(b) $c_0=0.0$, $c_1=1.0$, $c_2=0.8$, $c_3=0.0$, 
which corresponding to $\theta_T\simeq 2.3963$ and $\mu_1<0$.
}
}
\label{fig:app2}
\end{figure}

\bigskip
\section{Concluding remarks}\label{sec:dis}

We have discussed the existence and stability of spot solutions on spherical surfaces and spheroids, which are examples of curved surfaces.
In the case of spherical surfaces, due to rotational symmetry, our results can be used to construct spot solutions localized at arbitrary points under appropriate variable transformations. 
On the other hand, this is not the case for spheroids, and our results are applicable only to spot solutions localized at the north and south poles. 
In this case, we should discuss where the spot solution can be localized with respect to the latitudinal direction since it is axisymmetric. 
Numerical simulations suggest that the spot solution can be localized at the equator. 
Our analysis also does not exclude the possibility of spot solutions localized at locations other than the poles or equator. 
It is interesting to determine where the spot solution can be stably localized given the connection kernel and the curved surface, but this is an open problem.

It has also been shown that the derived stability criterion on the spheroid is numerically verifiable, and it has been discussed that the dynamics of the spot solution around the poles can be understood.
However, it is not known how the stability criterion depends on the features of the curved surface.
As noted in \cite{MCC}, the dynamics of the spot solution may depend on the Gaussian curvature, but this study was unable to make the connection.
It is exciting to be able to make specific characterizations, but this is an open question.

The derivation of equations for the motion of spot solutions is also an important topic to consider.
Many theories have been reported to derive differential equations of the position of localized patterns in Euclidean space \cite{Bress2, Ei, EI}.
Although it may be possible to derive the equation of motion by following their procedure, it remains to be considered how to describe the time evolution of the position of the spot solution on the surface and what kind of information can be extracted.
This is a future problem to be solved.

We also obtained various analytical results by assuming the mean firing rate function to be the Heaviside function.
More general choices, such as sigmoidal firing rate functions, were not considered here, and it is worth investigating how the pattern behavior changes with different nonlinearities.
Additionally, while our analysis focused on the scalar Amari model, a similar analysis could be extended to two-variable neural field systems incorporating a recovery variable, as in \cite{MCC}.
Understanding how this additional variable, in conjunction with surface geometry, affects the dynamics is an interesting open question.

The applicability of our analysis to more general curved surfaces is also mentioned.
In the case of perturbed spherical surfaces, 
most of the discussion can proceed in a similar manner.
The analysis of spot solutions and their primary eigenvalues on the perturbed spherical surface require information on the perturbation terms of the geodesic distance and the Jacobian.
Although the Jacobian perturbation is the same as in Section \ref{subsec:per},
the perturbation of the geodesic distance is generally not easy to analyze.
Perturbation analysis of geodetic distances can extend the adaptive range of this analysis, allowing analysis of complicated perturbed spherical spheres.
On the other hand, our analysis is not directly applicable to curved surfaces with large deformations of spherical surfaces, and different analysis methods need to be considered.

Finally, we discuss the effects of surface geometry on pattern formation in the neural field equations.
Several studies have used eigenfunction expansions of the solution to explain pattern formation \cite{PAO, VNFC}.
In the case of the damped wave model treated in \cite{PAO}, the eigenvalues and eigenfunctions of the Laplace-Beltrami operator depend on the surface geometry, 
and therefore, the various dynamics can be explained by the analysis of the given surface.
On the other hand, 
our analysis of a neural field equation on a spheroid suggests that the same curved surface can produce different dynamics depending on the shape of the connection kernel.
This means that pattern formation is not entirely determined by only determining the curved surface in our setting.
Therefore, to analyze pattern formation on a curved surface, it is important to properly understand not only the geometry of a curved surface but also its relationship to the spatial interaction.
As mentioned in Section \ref{sec:int}, 
various spatio-temporal patterns have been reported in neural field equations. 
Further theoretical studies of those patterns on curved surfaces will lead to a better understanding of the geometric constraints of the excitation patterns.

\bigskip
\section*{Acknowledgements}
HI is supported by JSPS KAKENHI Grant Numbers 23K13013 and 24H00188.

\bigskip
\section*{CRediT authorship contribution statement}
{\bf Hiroshi Ishii}: Conceptualization, Formal analysis, Investigation, Methodology, Writing-original draft, Writing–review and editing.
{\bf Riku Watanabe}: Formal analysis, Software, Visualization, Writing–review and editing.

\bigskip
\section*{Declaration of Competing Interest}
The authors declare that they have no known competing financial interests or personal relationships that could have appeared to influence the work reported in this paper.

\bigskip
\section*{Declaration of generative AI and AI-assisted technologies in the writing process}
During the preparation of this work the authors used DeepL and Grammarly in order to improve their English writing. 
After using this tool/service, the authors reviewed and edited the content as needed and take full responsibility for the content of the publication.

\bigskip
\appendix

\section{Computation of Example \ref{ex:exi}}\label{app:ex}

Here, we introduce the computation for the case $c_1>0,\ c_2=c_3=0$ in Example \ref{ex:exi}.
By setting $C(x):=-c_1x^3+(c_1-2c_0)x$, we have
\beaa
U(\theta_{T};\theta_{T}) = \pi C(\cos\theta_T) + 2\pi c_0.
\eeaa
To obtain existence conditions, 
it is sufficient to find the maximum and the minimum of $C(x)$ on $(-1,1)$.
Since we have $C'(x)=-3c_1x^2+c_1-2c_0$, we consider the following three cases:
\beaa
\mathrm{(i)}\ c_1\le 2c_0,\quad \mathrm{(ii)}\ c_1>2c_0,\ c_0+c_1>0,\quad \mathrm{(iii)}\ c_0+c_1\le 0.
\eeaa

(i) We obtain $C'\le 0$ on $[-1,1]$ and thus conclude that 
\beaa
\min_{x\in [-1,1]} C(x) =C(1) = -2c_0,\quad 
\max_{x\in [-1,1]} C(x) =C(-1) = 2c_0.
\eeaa

(ii) Since we have $\sqrt{\dfrac{c_1-2c_0}{3c_1}}\in (0,1)$ and $C'\left(\pm\sqrt{\dfrac{c_1-2c_0}{3c_1}}\right)=0$,
we get
\beaa
\min_{x\in [-1,1]} C(x) &=&\min\left\{C(1),C\left(-\sqrt{\dfrac{c_1-2c_0}{3c_1}}\right)\right\}, \\ 
\max_{x\in [-1,1]} C(x) &=&\max\left\{C(-1),C\left(\sqrt{\dfrac{c_1-2c_0}{3c_1}}\right)\right\}.
\eeaa
Here, we note that 
\beaa
C\left(\sqrt{\frac{c_1-2c_0}{3c_1}}\right)=\frac{2}{3}\sqrt{\frac{c_1-2c_0}{3c_1}}(c_1-2c_0)>0.
\eeaa
The following inequality is useful to evaluate them:
\beaa
(y + 1)^2(1 - 8y)=(1 - 2y)^3 - 27y^2
\begin{cases} >0 & \left(-1<y<\dfrac{1}{8}\right), \vspace{3mm} \\
<0 & \left(y>\dfrac{1}{8}\right).
\end{cases}
\eeaa
By using this inequality, 
we obtain
\beaa
C(-1) \le C\left(\sqrt{\dfrac{c_1-2c_0}{3c_1}}\right) 
\eeaa
when $c_1\ge 8c_0$.
Introducing additional two conditions
\beaa
\mathrm{(iia)}\ c_1 < 8c_0,\quad
\mathrm{(iib)}\ c_1 \ge 8c_0,
\eeaa
we obtain the following:
\beaa
\mathrm{(iia)}&\Rightarrow&\ \min_{x\in [-1,1]} C(x) =C\left(1\right),\quad 
\max_{x\in [-1,1]} C(x) ={\red C\left(-1\right)}, \\
\mathrm{(iib)}&\Rightarrow&\
 \min_{x\in [-1,1]} C(x) =C\left(-\sqrt{\dfrac{c_1-2c_0}{3c_1}}\right),\quad 
\max_{x\in [-1,1]} C(x) =C\left(\sqrt{\dfrac{c_1-2c_0}{3c_1}}\right) \\.
\eeaa

(iii) We deduce $C'\ge 0$ on $[-1,1]$ and hence obtain 
\beaa
\min_{x\in [-1,1]} C(x) =C(-1) = 2c_0,\quad 
\max_{x\in [-1,1]} C(x) =C(1) = -2c_0.
\eeaa

Notice that if we take care to exclude the case $x=\pm 1$, the condition for the existence of $\theta_T\in(0,\pi)$ is the assertion in Example \ref{ex:sta}.

\section{Numerical method}\label{app:num}

\renewcommand{\theequation}{B.\arabic{equation}}

This section explains the method for the numerical simulations in Section \ref{sec:num}.
To simulate Equation \eqref{eq:main} numerically,
we construct the code by the Julia programming language with some Python packages via \texttt{PyCall}. 
The code is based on the open code in \cite{NI1}.
Our code for numerical simulations is available as follows:
\begin{center}
\url{https://github.com/RikuWatanabe-git/neural-field-model-on-spheroid.git}
\end{center}

\medskip
Let us first explain how to construct a triangle mesh. 
We set $J$ points on $\bS^2$ by subdividing an icosahedron.
For a given $\eps\in\bR$, let $\bx_j=\bx(\theta_j,\phi_j)\in S\ (j=1,...,J)$ be the point generated by the projection of the $j$-th node in $\bS^2$.
The triangle mesh is constructed by \texttt{scipy.spatial.Delaunay} and \texttt{scipy.spatial.ConvexHull} in \texttt{SciPy} library in Python.
Let $\cS$ be the triangulation of $S$ with vertices $\{\bx_{j}\}_{j=1,\ldots,J}$.
For a point $\bx_{j}$, let $\Lambda_j$ be the set of all pairs of points $(\bx_{j_1},\bx_{j_2})$ such that $j_1<j_2$
and the triangle $[\bx_j,\bx_{j_1},\bx_{j_2}]$ is in $\cS$.

Next, we introduce the numerical scheme.
To approximate the integral term, 
we apply the formulation of area elements on the triangle mesh in \cite{Xu}.
We give the following quantity $s_j\ (j=1,...,J)$ as an area element corresponding to a point {\red $\bx_j$}:
\beaa
s_j:=\sum_{(\bx_{j_1}\bx_{j_2})\in \Lambda_j}\frac{1}{8}\left(\frac{|\bx_{j_1}-\bx_j|^2}{ \tan\zeta_1}+\frac{|\bx_{j_2}-\bx_j|^2}{\tan\zeta_2}\right).
\eeaa
Here we set 
\beaa
\zeta_1=\arccos\left(  \frac{\bx_j-\bx_{j_1}}{|\bx_{j}-\bx_{j_1}|} \cdot \frac{\bx_{j_2}-\bx_{j_1}}{|\bx_{j_2}-\bx_{j_1}|}\right),\quad 
\zeta_2=\arccos\left( \frac{\bx_j-\bx_{j_2}}{|\bx_{j}-\bx_{j_2}|} \cdot  \frac{\bx_{j_1}-\bx_{j_2}}{|\bx_{j_1}-\bx_{j_2}|}\right).
\eeaa
For a time step $\delta_t>0$, 
consider $u^{j}_n \simeq u(n\delta_t,\bx_j)\ $ and
approximate Equation \eqref{eq:main} as
\beaa
u^{n+1}_i=u_i^n+\delta_t\left(\sum\limits_{\{j|u_j^n\geq u_T\}}K(d_{i,j})s_j-u_i^n\right),
\eeaa
where $d_{i,j}:=d(\bx_i,\bx_j)\ (1\le i,j\le J)$.
We calculate each $d_{i,j}$ by \texttt{geographiclib.geodesic} library.

Let us discuss the validity of the numerical scheme.
We evaluate the validity of the scheme in the case of the spherical surface, i.e., $\eps=0$.
First, the number of meshes is evaluated by the absolute error between the spot solution $U(\theta)$ discussed in Section \ref{sec:s2} and the numerical solution calculated.
We use \eqref{kernel} as the connection kernel with the same parameters in the right panel of Fig. \ref{fig:kernel}.
Also, we set the model parameter 
\beaa
u_T \simeq -0.1898
\eeaa
to have a stable spot solution with $\theta_T=0.1$, as shown in the left panel of Fig. \ref{fig:app1}.

For a given number of meshes $J\in\bN$, 
the numerical solution $u_{num}(t,\bx)$ was calculated by giving $u_{num}(0,\bx_j)= U(\theta_j)\ (j=1,2,\ldots,J)$ and the absolute error was calculated as 
\be\label{abs-err}
\max_{1\le j \le J} |u_{num}(50.0,\bx_{j})-U(\theta_j)|.
\ee
The left panel of Fig. \ref{fig:B1} shows the absolute error when the numerical solutions are computed with $\delta_t = 10^{-3}$.
It can be seen that the error decreases monotonically with the number of meshes $J$.
Based on this result, we have set the number of meshes to $J = 40962$ and performed the numerical calculations in the paper.

For the time step $\delta_t$, we set the initial state to 
\beaa
u_{num}(0,\bx_{j}) = e^{-10.0\times \theta^2_j}
\eeaa
and evaluated the absolute error between the numerical solutions as we varied $\delta_{t}$. 
The purpose of this is to investigate the stability of the numerical scheme and the range of $\delta_t$ within which the results do not change significantly.
Let us denote the numerical solution corresponding to $\delta_t$ by $u_{num}(t,\bx;\delta_t)$.
In the right panel of Fig. \ref{fig:B1}, 
we calculated 
\be\label{abs-err2}
\max_{1\le j \le J} |u_{num}(t,\bx_{j};10^{-4})-u_{num}(t,\bx_j;\delta_t)|.
\ee
to compare with the numerical solution.
Since the numerical computation is completed faster when $\delta_t$ is large, we want to choose $\delta_t$ that is larger than $\delta_t = 10^{-4}$ and has a smaller error.
Therefore, we chose $\delta_t = 10^{-3}$ to perform the numerical simulations in the paper.

\begin{figure}[bt] 
\begin{center}
	\includegraphics[width=14cm]{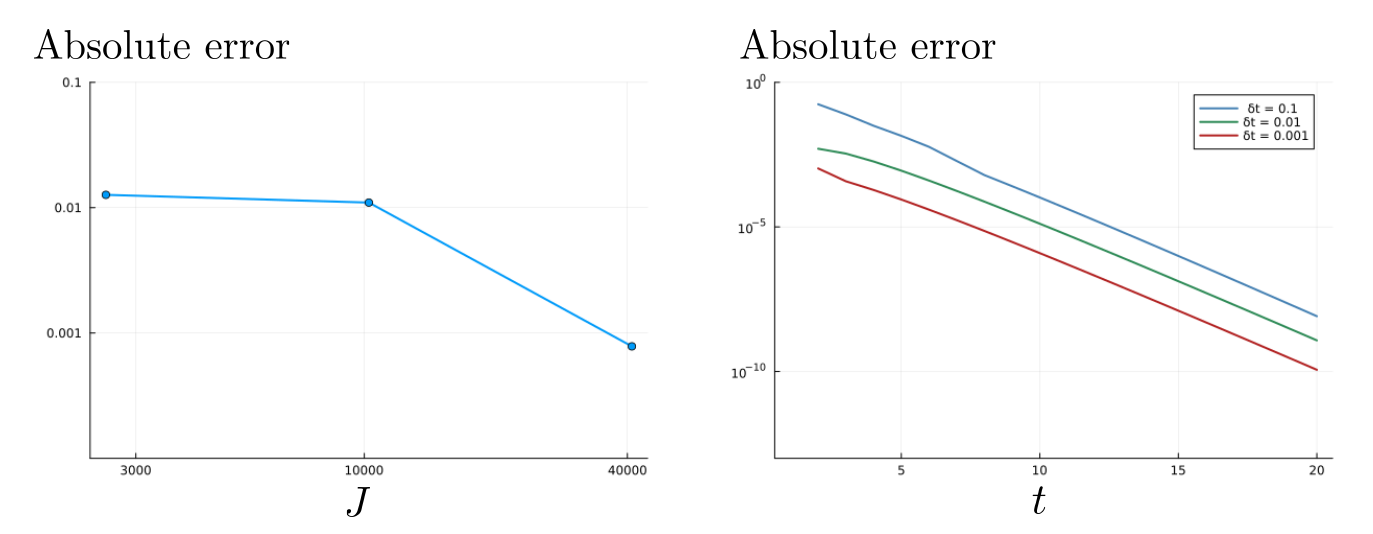}
\end{center}
\caption{\small{
Left: Absolute error depending on the number of meshes. The horizontal axis is the number of meshes $J$ and the vertical axis is the absolute error \eqref{abs-err}.
Right: Absolute error depending on the time step.
The horizontal axis is $t$ and the vertical axis is the absolute error \eqref{abs-err2}.
Blue: $\delta_t=10^{-1}$. Green: $\delta_t = 10^{-2}$. Red: $\delta_t=10^{-3}$.
}}
\label{fig:B1}
\end{figure}


\end{document}